\DeclareMathOperator{\sign}{sign}
\theoremstyle{plain}
\newtheorem{thm}{Theorem}
\newtheorem{lem}[thm]{Lemma}
\newtheorem{cor}{Corollary}
\theoremstyle{definition}
\theoremstyle{remark}
\begin{document}

\title{Gini estimation under infinite variance}

\author{Andrea Fontanari$^1$  - Delft University of Technology and CWI\\
Nassim Nicholas Taleb - Tandon School of Engineering, NYU\\ 
Pasquale Cirillo\footnote{These authors gladly acknowledge the generous support of the EU H2020 Marie Sklodowska-Curie Grant Agreement No 643045 WakEUpCall. Pasquale Cirillo also acknowledges the support of the EU Marie Sklodowska-Curie Career Integration Grant Multivariate Shocks (PCIG13-GA-2013-618794).}$^{,}$\footnote{\textit{Corresponding Author:} P.Cirillo@tudelft.nl. \textit{Address:} Applied Probability Group, EEMCS Faculty, Delft University of Technology, Mekelweg 4, 2628CD Delft, The Netherlands. \textit{Phone:} +31.15.27.82.589.} - Delft University of Technology }

\begin{abstract} {\small
We study the problems related to the estimation of the Gini index in presence of a fat-tailed data generating process, i.e. one in the stable distribution class with finite mean but infinite variance (i.e. with tail index $\alpha\in(1,2)$). We show that, in such a case, the Gini coefficient cannot be reliably estimated using conventional nonparametric methods, because of a downward bias that emerges under fat tails. This has important implications for the ongoing discussion about economic inequality. 

We start by discussing how the nonparametric estimator of the Gini index undergoes a phase transition in the symmetry structure of its asymptotic distribution, as the data distribution shifts from the domain of attraction of a light-tailed distribution to that of a fat-tailed one, especially in the case of infinite variance. We also show how the nonparametric Gini bias increases with lower values of $\alpha$. We then prove that maximum likelihood estimation outperforms nonparametric methods, requiring a much smaller sample size to reach efficiency. Finally, for fat-tailed data, we provide a simple correction mechanism to the small sample bias of the nonparametric estimator based on the distance between the mode and the mean of its asymptotic distribution. 

\textbf{Keywords:} Gini index; inequality measure; size distribution; extremes; $\alpha$-stable distribution.}
\end{abstract}

\maketitle

\section{Introduction}

Wealth inequality studies represent a field of economics, statistics and econophysics exposed to fat-tailed data generating processes, often with infinite variance \cite{chakra, kk}. This is not at all surprising if we recall that the prototype of fat-tailed distributions, the Pareto, has been proposed for the first time to model household incomes \cite{pareto}. However, the fat-tailedness of data can be problematic in the context of wealth studies, as the property of efficiency (and, partially, consistency) does not necessarily hold for many estimators of inequality and concentration \cite{embrecht, kk}.

The scope of this work is to show how fat tails affect the estimation of one of the most celebrated measures of economic inequality, the Gini index \cite{iddo3, gini, itzaky}, often used (and abused) in the econophysics and economics literature as the main tool for describing the distribution and the concentration of wealth around the world \cite{chakra, chopa, piketty}. 

The literature concerning the estimation of the Gini index is wide and comprehensive (e.g. \cite{iddo3, itzaky} for a review), however, strangely enough, almost no attention has been paid to its behavior in presence of fat tails, and this is curious if we consider that: 1) fat tails are ubiquitous in the empirical distributions of income and wealth \cite{kk, piketty}, and 2) the Gini index itself can be seen as a measure of variability and fat-tailedness \cite{iddosolo, iddo2, iddo2b, fontanari}. 

The standard method for the estimation of the Gini index is nonparametric: one computes the index from the empirical distribution of the available data using Equation \eqref{firstgini} below. But, as we show in this paper, this estimator suffers from a downward bias when we deal with fat-tailed observations. Therefore our goal is to close this gap by deriving the limiting distribution of the nonparametric Gini estimator in presence of fat tails, and propose possible strategies to reduce the bias. We show how the maximum likelihood approach, despite the risk of model misspecification, needs much fewer observations to reach efficiency when compared to a nonparametric one.\footnote{A similar bias also affects the nonparametric measurement of quantile contributions, i.e. those of the type ``the top 1\% owns x\% of the total wealth" \cite{taleb}. This paper extends the problem to the more widespread Gini coefficient, and goes deeper by making links with the limit theorems.}

Our results are relevant to the discussion about wealth inequality, recently rekindled by Thomas Piketty in \cite{piketty, piketty2}, as the estimation of the Gini index under fat tails and infinite variance may cause several economic analyses to be unreliable, if not markedly wrong. Why should one trust a biased estimator?

By fat-tailed data we indicate those data generated by a positive random variable $X$ with cumulative distribution function (c.d.f.) $F(x)$, which is regularly-varying of order $\alpha$ \cite{mikosch}, that is, for $\bar{F}(x):=1-F(x)$, one has
\begin{equation} 
\label{regularyVaring}
\lim_{x\to\infty}x^{\alpha}\bar{F}(x)=L(x),
\end{equation}
where $L(x)$ is a slowly-varying function such that $\lim_{x \to \infty}\frac{L(cx)}{L(x)}=1$ with $c>0$, and where $\alpha > 0$ is called the tail exponent. 

Regularly-varying distributions define a large class of random variables whose properties have been extensively studied in the context of extreme value theory \cite{dehaan, embrecht}, when dealing with the probabilistic behavior of maxima and minima. As pointed out in \cite{cirillo}, regularly-varying and fat-tailed are indeed synonyms. It is known that, if $X_1,...,X_n$ are i.i.d. observations with a c.d.f. $F(x)$ in the regularly-varying class, as defined in Equation \eqref{regularyVaring}, then their data generating process falls into the maximum domain of attraction of a Fr\'echet distribution with parameter $\rho$, in symbols $X\in MDA(\Phi(\rho))$\cite{dehaan}. This means that, for the partial maximum $M_n=\max(X_1,...,X_n)$, one has
\begin{equation} \label{frefre}
P\left(a_n^{-1}\left(M_n-b_n\right)\leq x\right)\overset{d}{\to} \Phi(\rho)=e^{-x^{-\rho}}, \qquad \rho>0,
\end{equation}
with $a_{n}>0$ and $b_n\in \mathbb{R}$ two normalizing constants. Clearly, the connection between the regularly-varying coefficient $\alpha$ and the Fr\'echet distribution parameter $\rho$ is given by: $\alpha=\frac{1}{\rho}$ \cite{embrecht}.\\
The Fr\'echet distribution is one of the limiting distributions for maxima in extreme value theory, together with the Gumbel and the Weibull; it represents the fat-tailed and unbounded limiting case \cite{dehaan}. The relationship between regularly-varying random variables and the Fr\'echet class thus allows us to deal with a very large family of random variables (and empirical data), and allows us to show how the Gini index is highly influenced by maxima, i.e. extreme wealth, as clearly suggested by intuition \cite{fontanari, kk}, especially under infinite variance. Again, this recommends some caution when discussing economic inequality under fat tails.

It is worth remembering that the existence (finiteness) of the moments for a fat-tailed random variable $X$ depends on the tail exponent $\alpha$, in fact
\begin{align}
\label{momentRegVar}
	E(X^{\delta})<\infty & \text{ if } \delta \leq \alpha, \nonumber \\
	E(X^{\delta})=\infty &  \text{ if } \delta>\alpha.
\end{align}
In this work, we restrict our focus on data generating processes with finite mean and infinite variance, therefore, according to Equation \eqref{momentRegVar}, on the class of regularly-varying distributions with tail index $\alpha\in(1,2)$. 

Table \ref{tableofresults} and Figure \ref{fig:empginis} present numerically and graphically our story, already suggesting its conclusion, on the basis of artificial observations sampled from a Pareto distribution (Equation \eqref{pareto} below) with tail parameter $\alpha$ equal to $1.1$. 

Table \ref{tableofresults} compares the nonparametric Gini index of Equation \eqref{firstgini} with the maximum likelihood (ML) tail-based one of Section \ref{mlesec}. For the different sample sizes in Table \ref{tableofresults}, we have generated $10^8$ samples, averaging the estimators via Monte Carlo. As the first column shows, the convergence of the nonparametric estimator to the true Gini value ($g=0.8333$) is extremely slow and monotonically increasing; this suggests an issue not only in the tail structure of the distribution of the nonparametric estimator but also in its symmetry.

Figure \ref{fig:empginis} provides some numerical evidence that the limiting distribution of the nonparametric Gini index loses its properties of normality and symmetry \cite{feller}, shifting towards a skewed and fatter-tailed limit, when data are characterized by an infinite variance. As we prove in Section \ref{sec:main}, when the data generating process is in the domain of attraction of a fat-tailed distribution, the asymptotic distribution of the Gini index becomes a skewed-to-the-right $\alpha$-stable law. This change of behavior is responsible of the downward bias of the nonparametric Gini under fat tails. However, the knowledge of the new limit allows us to propose a correction for the nonparametric estimator, improving its quality, and thus reducing the risk of badly estimating wealth inequality, with all the possible consequences in terms of economic and social policies \cite{kk, piketty, piketty2}.

\begin{table}[h]
\centering
\caption{Comparison of the Nonparametric (NonPar) and the Maximum Likelihood (ML) Gini estimators, using Paretian data with tail $\alpha = 1.1$ (finite mean, infinite variance) and different sample sizes. Number of Monte Carlo simulations: $10^8$. }
\label{tableofresults}
\begin{tabular}{c|cc|cc|c}
\multicolumn{1}{c|}{\textit{n}}                                                                   & \multicolumn{2}{c|}{Nonpar} & \multicolumn{2}{c|}{ML}        & \begin{tabular}[c]{@{}l@{}}Error Ratio\footnote{Error Ratio: $\frac{\sum_{i=1}^{n}|obs_i-(Nonpar)|}{\sum_{i=1}^{n}|obs_i-(ML)|}.$}\end{tabular}                                                       \\
\multicolumn{1}{c|}{\textit{\begin{tabular}[c]{@{}c@{}}(number of obs.)\end{tabular}}} & \textit{Mean}    & \textit{Bias}    & \textit{Mean}   & \textit{Bias} & \\ \hline
\textit{$10^3$}                                                                     & \textit{0.711}   & \textit{-0.122}  & \textit{0.8333} & \textit{0}    & 1.4                                                   \\
\textit{$10^4$}                                                                     & \textit{0.750}   & \textit{-0.083}  & \textit{0.8333} & \textit{0}    & 3                                                     \\
\textit{$10^5$}                                                                     & \textit{0.775}   & \textit{-0.058}  & \textit{0.8333} & \textit{0}    & 6.6                                                   \\
$10^6$                                                                             & 0.790            & -0.043           & 0.8333          & 0             & 156                                                   \\
$10^7$                                                                              & 0.802            & -0.031           & 0.8333          & 0             &    $10^{5}+$
\end{tabular}
\end{table}

\begin{figure*}
\centering{}
\includegraphics[scale=0.75]{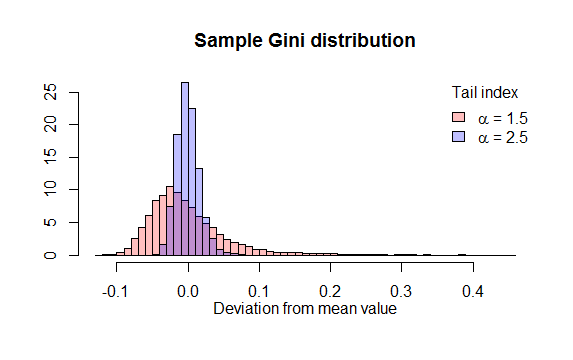}\caption{Histograms for the Gini nonparametric estimators for two Paretian (type I) distributions with different tail indices, with finite and infinite variance (plots have been centered to ease comparison). Sample size: $10^3$. Number of samples: $10^2$ for each distribution. \label{fig:empginis}}
\end{figure*}

The rest of the paper is organized as follows. In Section \ref{sec:main} we derive the asymptotic distribution of the sample Gini index when data possess an infinite variance. In Section \ref{mlesec} we deal with the maximum likelihood estimator; in Section \ref{sec:examples} we provide an illustration with Paretian observations; in Section \ref{sec:correction} we propose a simple correction based on the mode-mean distance of the asymptotic distribution of the nonparametric estimator, to take care of its small-sample bias. Section \ref{conclusions} closes the paper. A technical Appendix contains the longer proofs of the main results in the work.

\section{Asymptotics of the nonparametric estimator under infinite variance}
\label{sec:main}

We now derive the asymptotic distribution for the nonparametric estimator of the Gini index when the data generating process is fat-tailed with finite mean but infinite variance. 

The so-called stochastic representation of the Gini $g$ is 
\begin{equation}
g=\frac{1}{2}\frac{\mathbb{E}\left(|X'-X"|\right)}{\mu} \in [0,1], \label{Gini}
\end{equation}
where $X'$ and $X"$ are i.i.d. copies of a random variable $X$ with c.d.f. $F(x)\in [c,\infty)$, $c>0$, and with finite mean $\mathbb{E}(X)=\mu$. The quantity $\mathbb{E}\left(|X'-X"|\right)$ is known as the "Gini Mean Difference" (GMD) \cite{itzaky}. For later convenience we also define $g=\frac{\theta}{\mu}$ with $\theta=\frac{\mathbb{E}(|X'-X"|)}{2}$.

The Gini index of a random variable $X$ is thus the mean expected deviation between any two independent realizations of $X$, scaled by twice the mean \cite{iddo}.

The most common nonparametric estimator of the Gini index for a sample $X_1,...,X_n$ is defined as
\begin{equation}
\label{firstgini}
	G^{NP}(X_{n})=\frac{\sum_{1\leq i < j \leq n}|X_i - X_j|}{(n-1)\sum_{i=1}^{n}X_i},
\end{equation}
which can also be expressed as
\begin{equation}
G^{NP}(X_{n})= \frac{\sum_{i=1}^{n}(2(\frac{i-1}{n-1}-1)X_{(i)}}{\sum_{i=1}^{n}X_{(i)}} = \frac{\frac{1}{n}\sum_{i=1}^{n}Z_{(i)}}{\frac{1}{n}\sum_{i=1}^{n}X_i},
\label{giniordered}
\end{equation}
where $X_{(1)},X_{(2)},..., X_{(n)}$ are the ordered statistics of $X_1,...,X_n$, such that: $X_{(1)} < X_{(2)} < ... < X_{(n)}$ and $Z_{(i)}=2\left(\frac{i-1}{n-1}-1\right)X_{(i)}$. The asymptotic normality of the estimator in Equation \eqref{giniordered} under the hypothesis of finite variance for the data generating process is known \cite{kk, itzaky}. The result directly follows from the properties of the U-statistics and the L-estimators involved in Equation \eqref{giniordered}

A standard methodology to prove the limiting distribution of the estimator in Equation \eqref{giniordered}, and more in general of a linear combination of order statistics, is to show that, in the limit for $n\to\infty$, the sequence of order statistics can be approximated by a sequence of i.i.d random variables \cite{rao, david}. However, this usually requires some sort of $L^2$ integrability of the data generating process, something we are not assuming here. 

Lemma \ref{lemmaiid} (proved in the Appendix) shows how to deal with the case of sequences of order statistics generated by fat-tailed $L^1$-only integrable random variables.  
\begin{lem}
\label{lemmaiid}
Consider the following sequence $R_n=\frac{1}{n}\sum_{i=1}^{n}(\frac{i}{n}-U_{(i)})F^{-1}(U_{(i)})$ where $U_{(i)}$ are the order statistics of a uniformly distributed i.i.d random sample. Assume that $F^{-1}(U)\in L^1$. Then the following results hold:
\begin{equation}
\label{result_1}
R_{n}\xrightarrow{L^{1}}0,
\end{equation}
and
\begin{equation}
\label{result_2}
\frac{n^{\frac{\alpha-1}{\alpha}}}{L_{0}(n)}R_{n}\xrightarrow{L^{1}}0,
\end{equation}
with $\alpha\in(1,2)$ and $L_{0}(n)$ a slowly-varying function.
\end{lem}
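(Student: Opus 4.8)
The plan is to rewrite $R_n$ as an off-diagonal double sum with a \emph{mean-zero} kernel, and then control it in $L^2$, which suffices for the $L^1$ statements. The starting point is the exact identity $\frac{i}{n}-U_{(i)}=\frac{1}{n}\sum_{j=1}^{n}\bigl(\mathbf{1}\{U_j\le U_{(i)}\}-U_{(i)}\bigr)$, which expresses the weight as the centered uniform empirical process evaluated at the order statistic (indeed $\frac{1}{n}\sum_j\mathbf{1}\{U_j\le U_{(i)}\}=\frac{i}{n}$). Substituting this into $R_n$ and using that a sum over order statistics equals the sum over the unordered sample, I would obtain
\[
R_n=\frac{1}{n^{2}}\sum_{k=1}^{n}\sum_{j=1}^{n}F^{-1}(U_k)\bigl(\mathbf{1}\{U_j\le U_k\}-U_k\bigr).
\]
Separating the diagonal $j=k$ gives $R_n=\frac{1}{n}A_n+T_n$, where $A_n=\frac{1}{n}\sum_k F^{-1}(U_k)(1-U_k)$ converges in $L^1$ by the law of large numbers (the summand is in $L^1$ since $F^{-1}(U)\in L^1$ and $1-U$ is bounded), so $\mathbb{E}|\tfrac1n A_n|=O(n^{-1})$, while $T_n=\frac{1}{n^{2}}\sum_{j\ne k}\psi(U_k,U_j)$ with kernel $\psi(u,v)=F^{-1}(u)\bigl(\mathbf{1}\{v\le u\}-u\bigr)$.

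The crucial structural fact is that $\int_0^1\psi(u,v)\,dv=F^{-1}(u)(u-u)=0$, i.e. integrating out the second argument annihilates the kernel, whereas the partial integral $\psi_2(v):=\int_0^1\psi(u,v)\,du=\int_v^1 F^{-1}(u)\,du-\int_0^1 uF^{-1}(u)\,du$ is bounded on $[0,1]$ (its tail $\int_v^1 F^{-1}$ is the vanishing tail of a convergent integral). I would then expand $\mathbb{E}[T_n^2]=\frac{1}{n^4}\sum\mathbb{E}[\psi(U_k,U_j)\psi(U_{k'},U_{j'})]$ and use the vanishing projection to kill every index pattern except three: the pattern $j=j'$ with $k,k',j$ distinct (about $n^3$ tuples, each contributing $\mathbb{E}[\psi_2(U)^2]<\infty$), the diagonal $k=k',\,j=j'$ (about $n^2$ tuples), and the cross pattern $k=j',\,j=k'$ (about $n^2$ tuples). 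Hence $\mathbb{E}[T_n^2]=\frac{1}{n}\mathbb{E}[\psi_2(U)^2]+O(n^{-2})$, giving $\|T_n\|_2=O(n^{-1/2})$ and therefore $\mathbb{E}|R_n|\le \mathbb{E}|\tfrac1n A_n|+\|T_n\|_2=O(n^{-1/2})$, which is Equation \eqref{result_1}. Equation \eqref{result_2} follows immediately, since $\frac{n^{(\alpha-1)/\alpha}}{L_0(n)}\,n^{-1/2}=\frac{n^{1/2-1/\alpha}}{L_0(n)}\to 0$ for $\alpha\in(1,2)$, using that a slowly-varying $L_0$ is $n^{o(1)}$ against the strictly negative exponent $1/2-1/\alpha$.

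The step I expect to be the main obstacle is exactly where infinite variance bites: the naive bound $\mathbb{E}|R_n|\le\frac{1}{n}\sum_i\|\tfrac{i}{n}-U_{(i)}\|_2\,\|F^{-1}(U_{(i)})\|_2$ is useless, because $F^{-1}(U)\notin L^2$ and the top order statistics have divergent second moments. The double-sum reformulation rescues the argument because the only surviving \emph{heavy} contribution is the diagonal term $\mathbb{E}[\psi(U,U')^2]=\mathbb{E}\bigl[F^{-1}(U)^2\,U(1-U)\bigr]$, where the Bernoulli-variance damping factor $U(1-U)$ tames the singularity of $F^{-1}$. I would verify finiteness of this term from the $L^1$ assumption alone: writing $g=F^{-1}$, monotonicity gives $(1-u)g(u)\le\int_u^1 g\le\mu$, so $M:=\sup_u(1-u)g(u)<\infty$ and $\int_0^1 g(u)^2(1-u)\,du\le M\int_0^1 g(u)\,du<\infty$; for regularly-varying tails this is precisely the requirement $\alpha>1$. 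Checking the boundedness of $\psi_2$ and the finiteness of the cross-pattern term is then routine, and both again use only $F^{-1}(U)\in L^1$.
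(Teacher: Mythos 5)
Your proof is correct, but it takes a genuinely different route from the paper's. Both arguments start from the same identity $\frac{i}{n}=F_n(U_{(i)})$ for the uniform empirical c.d.f., but the paper then bounds $\mathbb{E}|R_n|$ through a chain of H\"older ($L^1\times L^{\infty}$) and Cauchy--Schwarz inequalities, invoking the Vapnik--Chervonenkis uniform bound $\mathbb{E}\sup_u|F_n(u)-u|\le\sqrt{(\log(n+1)+\log 2)/n}$ for the empirical process, which yields $\mathbb{E}|R_n|\le\mu\sqrt{(\log(n+1)+\log 2)/n}$; the second claim then follows by multiplying this rate by $n^{(\alpha-1)/\alpha}/L_0(n)$, exactly as you do. You instead expand $R_n$ into a V-statistic with kernel $\psi(u,v)=F^{-1}(u)\bigl(\mathbf{1}\{v\le u\}-u\bigr)$, exploit its degeneracy in the second argument ($\int_0^1\psi(u,v)\,dv=0$) to annihilate all but three index patterns in $\mathbb{E}[T_n^2]$, and neutralize the infinite variance of $F^{-1}(U)$ with the monotonicity bound $(1-u)F^{-1}(u)\le\int_u^1F^{-1}\le\mu$, so that the heavy diagonal term $\mathbb{E}\bigl[F^{-1}(U)^2U(1-U)\bigr]$ is finite under the $L^1$ assumption alone. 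Your enumeration of surviving patterns is complete: non-vanishing requires $j\in\{k',j'\}$ and $j'\in\{k,j\}$, which under the constraints $j\ne k$, $j'\ne k'$ leaves exactly $j=j'$ with $k\ne k'$, the full diagonal, and the cross pattern; moreover the same bound $M=\sup_u(1-u)F^{-1}(u)<\infty$ supplies the conditional absolute integrability needed to justify the Fubini steps you labeled routine. As to what each approach buys: yours is more self-contained (no empirical-process machinery), delivers the slightly sharper rate $O(n^{-1/2})$ without the $\sqrt{\log n}$ factor, and sidesteps the delicate factorization in the paper's H\"older step, where $\mathbb{E}\bigl[\sup_u|F_n(u)-u|\cdot|F^{-1}(U_{(i)})|\bigr]$ is split into a product of expectations even though the two factors are dependent; the paper's route, conversely, is shorter and applies verbatim to any remainder whose weights are dominated by the Kolmogorov--Smirnov statistic (e.g.\ the $2\bigl(\frac{i-1}{n-1}-U_{(i)}\bigr)$ weights actually appearing in the proof of Theorem \ref{limitGMD}), whereas your variance computation is tailored to the specific weights $\frac{i}{n}-U_{(i)}$ and would need a small extra $O(1/n)$ adjustment there. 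One minor point: your inequality $(1-u)g(u)\le\mu$ uses $g=F^{-1}\ge 0$, which is legitimate in this paper since the support is $[c,\infty)$ with $c>0$, but it is worth stating explicitly as a hypothesis.
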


\subsection{A quick recap on $\alpha$-stable random variables}

We here introduce some notation for $\alpha$-stable distributions, as we need them to study the asymptotic limit of the Gini index.

A random variable $X$ follows an $\alpha$-stable distribution, in symbols $X\sim S(\alpha,\beta,\gamma,\delta)$, if its characteristic function is
\begin{equation*}
E(e^{itX})=
\begin{cases}
e^{-\gamma^{\alpha}|t|^{\alpha}(1-i\beta\sign(t))\tan(\frac{\pi\alpha}{2})+i\delta t} & \alpha\ne 1 \\
e^{-\gamma|t|(1+i\beta\frac{2}{\pi}\sign(t))\ln|t|+i\delta t} & \alpha= 1
\end{cases},
\end{equation*}
where $\alpha\in(0,2)$ governs the tail, $\beta\in[-1,1]$ is the skewness, $\gamma\in\mathbb{R}^{+}$ is the scale parameter, and $\delta\in\mathbb{R}$ is the location one. This is known as the $S1$ parametrization of $\alpha$-stable distributions \cite{nolan, taqqu}. 

Interestingly, there is a correspondence between the $\alpha$ parameter of an $\alpha$-stable random variable, and the $\alpha$ of a regularly-varying random variable as per Equation \eqref{regularyVaring}: as shown in \cite{feller, nolan}, a regularly-varying random variable of order $\alpha$ is $\alpha$-stable, with the same tail coefficient. This is why we do not make any distinction in the use of the $\alpha$ here. Since we aim at dealing with distributions characterized by finite mean but infinite variance, we restrict our focus to $\alpha\in (1,2)$, as the two $\alpha$'s coincide. 

Recall that, for $\alpha\in(1,2]$, the expected value of an $\alpha$-stable random variable $X$ is equal to the location parameter $\delta$, i.e. $\mathbb{E}(X)=\delta$. For more details, we refer to \cite{nolan, taqqu}.

The standardized $\alpha$-stable random variable is expressed as
\begin{equation}
\label{Z}
S_{\alpha,\beta}\sim S(\alpha,\beta,1,0).
\end{equation}

We note that $\alpha$-stable distributions are a subclass of infinitely divisible distributions. Thanks to their closure under convolution, they can be used to describe the limiting behavior of (rescaled) partials sums, $S_n=\sum_{i=1}^n X_i$, in the General Central Limit Theorem (GCLT) setting \cite{feller}. For $\alpha=2$ we obtain the normal distribution as a special case, which is the limit distribution for the classical CLTs, under the hypothesis of finite variance.

In what follows we indicate that a random variable is in the domain of attraction of an $\alpha$-stable distribution, by writing $X\in DA(S_{\alpha})$. Just observe that this condition for the limit of partial sums is equivalent to the one given in Equation \eqref{frefre} for the limit of partial maxima \cite{embrecht, feller}.

\subsection{The $\alpha$-stable asymptotic limit of the Gini index} \label{alphasec}

Consider a sample $X_1,...,X_n$ of i.i.d. observations with a continuous c.d.f. $F(x)$ in the regularly-varying class, as defined in Equation \eqref{regularyVaring}, with tail index $\alpha\in(1,2)$. The data generating process for the sample is in the domain of attraction of a Fr\'echet distribution with $\rho\in(\frac{1}{2},1)$, given that $\rho =\frac{1}{\alpha}$. 

For the asymptotic distribution of the Gini index estimator, as presented in Equation \eqref{giniordered}, when the data generating process is characterized by an infinite variance, we can make use of the following two theorems: Theorem \ref{limitGMD} deals with the limiting distribution of the Gini Mean Difference (the numerator in Equation \eqref{giniordered}), while Theorem \ref{limitGini} extends the result to the complete Gini index. Proofs for both theorems are in the Appendix.

\begin{thm}
\label{limitGMD}
Consider a sequence $(X_i)_{1\leq i \leq n}$ of i.i.d random variables from a distribution $X$ on $[c,+\infty)$ with $c>0$, such that $X$ is in the domain of attraction of an $\alpha$-stable random variable, $X\in DA(S_\alpha)$, with $\alpha\in(1,2)$. Then the sample Gini mean deviation (GMD)  $\frac{\sum_{i=1}^{n}Z_{(i)}}{n}$ satisfies the following limit in distribution:
\begin{equation}
\label{GMDstable}
\frac{n^{\frac{\alpha-1}{\alpha}}}{L_{0}(n)}\left(\frac{1}{n}\sum_{i=1}^{n}Z_{(i)}-\theta\right)\overset{d}{\to} S_{\alpha,1},
\end{equation}
where $Z_i=(2F(X_i)-1)X_i$, $\mathbb{E}(Z_{i})=\theta$, $L_{0}(n)$ is a slowly-varying function such that Equation \eqref{L0seq} holds (see the Appendix), and $S_{\alpha,1}$ is a right-skewed standardized $\alpha$-stable random variable defined as in Equation \eqref{Z}.

Moreover the statistic $\frac{1}{n}\sum_{i=1}^{n}Z_{(i)}$ is an asymptotically consistent estimator for the GMD, i.e. $\frac{1}{n}\sum_{i=1}^{n}Z_{(i)}\overset{P}\to \theta$.
\end{thm}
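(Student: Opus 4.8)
The plan is to reduce the linear combination of order statistics in \eqref{giniordered} to a sum of i.i.d. terms, to which the Generalized Central Limit Theorem (GCLT) for the domain of attraction of an $\alpha$-stable law applies directly, and then to show that the error incurred in this reduction is negligible at the required scaling thanks to Lemma \ref{lemmaiid}. The payoff of working with $Z_i=(2F(X_i)-1)X_i$ rather than with the raw $X_i$ is that, after the probability integral transform, the ``theoretical'' version of the numerator is \emph{exactly} a sum of i.i.d. copies of $Z$; the only genuinely new difficulty is the replacement of the unknown $F$ by the empirical c.d.f.

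First I would apply the probability integral transform, writing $X_{(i)}=F^{-1}(U_{(i)})$ with $U_{(1)}\leq\cdots\leq U_{(n)}$ the order statistics of an i.i.d. uniform sample. The empirical numerator carries the plug-in weights $\tfrac{i}{n}$ (the empirical c.d.f. at $X_{(i)}$; the constants $\tfrac{i-1}{n-1}$ of \eqref{giniordered} differ by $O(1/n)$ and, as one checks, do not affect the limit), whereas the exact weights are $U_{(i)}=F(X_{(i)})$. This yields the decomposition
\begin{equation}
\frac{1}{n}\sum_{i=1}^{n}Z_{(i)}=\frac{1}{n}\sum_{i=1}^{n}\bigl(2U_{(i)}-1\bigr)F^{-1}(U_{(i)})+\frac{2}{n}\sum_{i=1}^{n}\Bigl(\frac{i}{n}-U_{(i)}\Bigr)F^{-1}(U_{(i)}).
\end{equation}
The crucial observation is that the first term is order-invariant: since $F(X_{(i)})=U_{(i)}$, it equals $\frac{1}{n}\sum_{i=1}^{n}(2F(X_i)-1)X_i=\frac{1}{n}\sum_{i=1}^{n}Z_i$, a genuine sum of i.i.d. random variables, while the second term is precisely $2R_n$ in the notation of Lemma \ref{lemmaiid}.

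Next I would establish the limit for the i.i.d. part. Since $|2F(x)-1|\leq 1$, the variable $Z$ has finite mean $\theta$ (being dominated by $X\in L^1$, as $\alpha>1$) and, by tail equivalence $(2F(x)-1)x\sim x$ as $x\to\infty$, inherits the regularly-varying right tail of $X$ with the same index $\alpha$ as in \eqref{regularyVaring}; moreover $Z\geq -F^{-1}(1/2)$ is bounded below, so the limiting stable law is maximally right-skewed with $\beta=1$. The GCLT then gives $\frac{n^{(\alpha-1)/\alpha}}{L_0(n)}\bigl(\frac{1}{n}\sum_{i=1}^{n}Z_i-\theta\bigr)\overset{d}{\to}S_{\alpha,1}$, with $L_0$ the slowly-varying function in the stable norming sequence. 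It then remains to absorb the remainder: result \eqref{result_2} of Lemma \ref{lemmaiid} states exactly that $\frac{n^{(\alpha-1)/\alpha}}{L_0(n)}R_n\xrightarrow{L^{1}}0$, hence in probability, so by Slutsky's theorem the empirical GMD inherits the same stable limit \eqref{GMDstable}. Consistency follows from the same decomposition: $\frac{1}{n}\sum_{i=1}^{n}Z_i\overset{P}{\to}\theta$ by the weak law for the i.i.d. $L^1$ sequence, while $R_n\xrightarrow{L^{1}}0$ by \eqref{result_1}, whence $\frac{1}{n}\sum_{i=1}^{n}Z_{(i)}\overset{P}{\to}\theta$.

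I expect the main obstacle to be exactly the content isolated in Lemma \ref{lemmaiid}: controlling the remainder $R_n$ at the sharp rate $n^{(\alpha-1)/\alpha}/L_0(n)$ under only $L^1$ integrability. The classical route to L-statistic asymptotics linearizes the order statistics via an $L^2$/H\'ajek-type argument, which is unavailable here because the variance is infinite; the deviation $\frac{i}{n}-U_{(i)}$ between empirical and true quantile levels must instead be paired against the heavy-tailed weights $F^{-1}(U_{(i)})$ delicately enough that the product, after the stable scaling, still vanishes. Secondary care is needed to check rigorously that $Z$ lies in $DA(S_\alpha)$ with the identical norming constants and skewness $\beta=1$, so that the i.i.d. sum and the remainder are rescaled by precisely the same sequence.
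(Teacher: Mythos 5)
Your proposal is correct, and its skeleton coincides with the paper's proof: the same decomposition of the ordered sum into the i.i.d.\ part $\frac{1}{n}\sum_{i=1}^{n}Z_i$ plus the remainder $R_n$, the same appeal to Lemma \ref{lemmaiid} together with Slutsky's theorem to discard $R_n$ at the scaling $n^{(\alpha-1)/\alpha}/L_0(n)$, and the same GCLT step with skewness $\beta=1$ forced, via \eqref{beta}, by the lower-bounded support of $Z$. Where you genuinely depart is the verification that $Z\in DA(S_\alpha)$: you argue by direct tail equivalence, noting $2F(x)-1\to 1$ so that $(2F(x)-1)x\sim x$ and hence, by regular variation of $\bar F$, $P(Z>z)\sim P(X>z)\sim L(z)z^{-\alpha}$; the paper instead sandwiches $P(|Z|>z)$ between $P(|\tilde{Z}|>z)$, with $\tilde{Z}=(2U-1)X$, $U\perp X$ (lower bound via the rearrangement inequality, tail of $\tilde{Z}$ computed through Breiman's theorem), and the trivial upper bound $P(2X>z)$. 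Your route buys two things. It is more elementary, dispensing with Breiman's theorem and the rearrangement inequality altogether; and it is sharper, because the paper's sandwich traps the tail between two regularly-varying functions with \emph{different} constants, which identifies the index $\alpha$ but not the asymptotic constant --- so the paper's concluding step ``$P(|Z|>z)\sim L(z)z^{-\alpha}$ by squeezing'' is loose as written, whereas your tail-equivalence argument pins the constant exactly and thereby guarantees that the norming sequence $c_n=n^{1/\alpha}L_0(n)$ solving \eqref{L0seq} for $Z$ involves the same slowly-varying $L$ as the tail of $X$, which the statement implicitly requires. Two smaller points also work in your favor: you make explicit that the weights $\frac{i-1}{n-1}$ of \eqref{giniordered} and the weights $\frac{i}{n}$ of Lemma \ref{lemmaiid} differ by $O(1/n)$ uniformly, a bookkeeping mismatch the paper passes over silently (its theorem-level remainder carries $\frac{i-1}{n-1}$ while the Lemma is stated with $\frac{i}{n}$), and indeed the induced error is $O_P(n^{-1/\alpha})$ after the stable scaling, hence harmless; and your lower bound $Z\geq -F^{-1}(1/2)$ is the accurate one, since on the negative part $|Z|\leq X\leq F^{-1}(1/2)$, whereas the paper's claim $Z\in[-c,+\infty)$ is not correct in general --- though only boundedness below matters for $\beta=1$. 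Finally, your derivation of the consistency claim from \eqref{result_1} and the weak law of large numbers cleanly supplies the second assertion, which the paper states but never spells out.
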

Note that Theorem \ref{limitGMD} could be restated in terms of the maximum domain of attraction $MDA(\Phi(\rho))$ as defined in Equation \eqref{frefre}.

\begin{thm}
\label{limitGini}
Given the same assumptions of Theorem \ref{limitGMD}, the estimated Gini index $G^{NP}(X_{n})=\frac{\sum_{i=1}^{n}Z_{(i)}}{\sum_{i=1}^{n}X_i}$ satisfies the following limit in distribution
\begin{equation}
\label{giniSlim}
\frac{n^{\frac{\alpha-1}{\alpha}}}{L_{0}(n)}\left(G^{NP}(X_{n})-\frac{\theta}{\mu}\right)\overset{d}{\to} Q,
\end{equation}
where $\mathbb{E}(Z_{i})=\theta$, $\mathbb{E}(X_i)=\mu$, $L_{0}(n)$ is the same slowly-varying function defined in Theorem \ref{limitGMD} and $Q$ is a right-skewed $\alpha$-stable random variable $S(\alpha,1,\frac{1}{\mu},0)$.\\
Furthermore the statistic $\frac{\sum_{i=1}^{n}Z_{(i)}}{\sum_{i=1}^{n}X_i}$ is an asymptotically consistent estimator for the Gini index, i.e. $\frac{\sum_{i=1}^{n}Z_{(i)}}{\sum_{i=1}^{n}X_i}\overset{P}\to \frac{\theta}{\mu}=g$.
\end{thm}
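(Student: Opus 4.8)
The plan is to view $G^{NP}(X_{n})=N_{n}/D_{n}$ as a ratio of the two averages $N_{n}:=\frac{1}{n}\sum_{i=1}^{n}Z_{(i)}$ and $D_{n}:=\frac{1}{n}\sum_{i=1}^{n}X_{i}$, and to push the stable limit of Theorem \ref{limitGMD} through this ratio. Writing $k_{n}:=\frac{n^{(\alpha-1)/\alpha}}{L_{0}(n)}$, Theorem \ref{limitGMD} already supplies $k_{n}(N_{n}-\theta)\overset{d}{\to}S_{\alpha,1}$ together with $N_{n}\overset{P}{\to}\theta$, while $D_{n}\overset{P}{\to}\mu$ is just the $L^{1}$ law of large numbers, valid since $X\in L^{1}$. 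The consistency statement $G^{NP}(X_{n})\overset{P}{\to}\theta/\mu=g$ is then immediate from the continuous mapping theorem applied to $(x,y)\mapsto x/y$ at the point $(\theta,\mu)$, as $\mu>0$.

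For the limiting law I would first linearize the ratio exactly. Since $g=\theta/\mu$, adding and subtracting $g D_n$ in the numerator gives
\begin{equation*}
G^{NP}(X_{n})-g=\frac{(N_{n}-\theta)-g\,(D_{n}-\mu)}{D_{n}},
\end{equation*}
so that $k_{n}\big(G^{NP}(X_{n})-g\big)=D_{n}^{-1}\big[k_{n}(N_{n}-\theta)-g\,k_{n}(D_{n}-\mu)\big]$. The denominator factor $D_{n}^{-1}$ is harmless because $D_{n}\overset{P}{\to}\mu$; the real content is the bracket. Crucially, the term $k_{n}(D_{n}-\mu)$ is of the same order as $k_{n}(N_{n}-\theta)$ and does not vanish: by the GCLT the partial sums $\sum X_{i}$ satisfy $k_{n}(D_{n}-\mu)\overset{d}{\to}S_{\alpha,1}$ as well (the same normalization $k_{n}$ applies because $Z$ and $X$ share the tail index $\alpha$ and the same tail constant). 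Hence both contributions live at the scale $k_{n}^{-1}$ and must be handled together.

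The hard part will therefore be the joint convergence of the pair $k_{n}(N_{n}-\theta,\,D_{n}-\mu)$, whose components are built from the same sample and are strongly dependent. My approach is to reduce both components to normalized sums of the i.i.d. bivariate vectors $(Z_{i},X_{i})$ with $Z_{i}=(2F(X_{i})-1)X_{i}$: Lemma \ref{lemmaiid} is exactly what lets the order-statistic weights $\tfrac{i-1}{n-1}$ in $N_{n}$ be replaced by $F(X_{(i)})$ at the scale $k_{n}$, so that the numerator becomes asymptotically $\frac{1}{n}\sum_{i}Z_{i}$. One then invokes the multivariate stable limit theorem for the i.i.d. vectors $(Z_{i},X_{i})$. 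The key observation is that $Z_{i}/X_{i}=2F(X_{i})-1\to 1$ as $X_{i}\to\infty$, so the two coordinates have asymptotically identical tails and their extreme values occur simultaneously; consequently the L\'evy (spectral) measure of the bivariate stable limit is concentrated on the diagonal, and the limit of $k_{n}(N_{n}-\theta,\,D_{n}-\mu)$ is the degenerate vector $(S,S)$ for a single $S\sim S_{\alpha,1}$.

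Combining the pieces via Slutsky then yields $k_{n}\big(G^{NP}(X_{n})-g\big)\overset{d}{\to}\frac{1}{\mu}(S-gS)=\frac{1-g}{\mu}\,S$, a right-skewed $\alpha$-stable law (the factor $1-g>0$ and $\mu>0$ preserve $\beta=1$), of the stated form $S(\alpha,1,\cdot/\mu,0)$. The delicate point to pin down in the write-up is precisely this scale constant, which records the non-negligible and perfectly dependent denominator fluctuation; the principal obstacle upstream is establishing the diagonal concentration of the bivariate stable limit, after which only routine Slutsky bookkeeping remains.
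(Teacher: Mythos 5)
Your proposal is correct, and it takes a genuinely different route from the paper. Where you keep the numerator and denominator fluctuations separate and invoke a bivariate stable limit for the i.i.d.\ vectors $(Z_{i},X_{i})$ --- with the spectral measure concentrated on the diagonal because $Z_{i}/X_{i}=2F(X_{i})-1\to 1$ as $X_{i}\to\infty$ --- the paper performs the cancellation \emph{before} taking any limit: it writes
\[
\frac{\sum_{i=1}^{n}Z_{i}}{\sum_{i=1}^{n}X_{i}}-\frac{\theta}{\mu}
=\frac{n}{\sum_{i=1}^{n}X_{i}}\cdot\frac{1}{n}\sum_{i=1}^{n}\hat{Z}_{i},
\qquad
\hat{Z}_{i}:=X_{i}\left(2F(X_{i})-1-\tfrac{\theta}{\mu}\right)=Z_{i}-gX_{i},
\]
so that $\mathbb{E}(\hat{Z}_{i})=0$, and then applies the \emph{univariate} GCLT to the single i.i.d.\ array $\hat{Z}_{i}$ (bounded below, hence $\beta=1$), handling the prefactor $n/\sum_{i}X_{i}\overset{P}{\to}1/\mu$ by Slutsky. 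This buys exactly what you flagged as your principal obstacle: the joint-convergence and diagonal-concentration step disappears, replaced by a one-dimensional domain-of-attraction check for $\hat{Z}$. Your route costs more machinery (multivariate regular variation) but makes transparent \emph{why} the denominator fluctuation contributes at the same scale; the reduction from order statistics to i.i.d.\ variables via Lemma \ref{lemmaiid} is used identically in both arguments, and your consistency argument matches the paper's.

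There is one substantive point of divergence, and on it your computation is the more careful one: the scale constant. You obtain $\frac{1-g}{\mu}S_{\alpha,1}$, while the theorem asserts $Q\sim S(\alpha,1,\frac{1}{\mu},0)$ with $L_{0}$ unchanged from Theorem \ref{limitGMD}. The paper's proof asserts that $\hat{Z}$ ``shares the same slowly-varying function'' as $Z$ and $X$, describing $\hat{Z}_{i}$ as $Z_{i}$ ``shifted by $\theta/\mu$''; but the shift is by $gX_{i}$, not by a constant. In the right tail, $\hat{Z}=(1-g)X-2X\bar{F}(X)$ with $X\bar{F}(X)\to 0$ for $\alpha>1$, so $P(\hat{Z}>z)\sim(1-g)^{\alpha}L(z)z^{-\alpha}$: the tail constant is multiplied by $(1-g)^{\alpha}$. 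Feeding this through the GCLT with the Theorem-\ref{limitGMD} normalization $c_{n}=n^{1/\alpha}L_{0}(n)$ yields $(1-g)S_{\alpha,1}$ for the centered sum, hence $\frac{1-g}{\mu}S_{\alpha,1}$ overall --- in agreement with your degenerate limit $(S,S)$ and with the one-big-jump heuristic: a single large observation $M$ perturbs the ratio by roughly $\frac{M}{n}\cdot\frac{\mu-\theta}{\mu^{2}}=\frac{M}{n}\cdot\frac{1-g}{\mu}$, not $\frac{M}{n}\cdot\frac{1}{\mu}$. So keep the factor $1-g$ in your write-up (equivalently, absorb it into a redefined $L_{0}$); note the same factor would propagate into the Paretian scale of Corollary \ref{paretoexampleS}. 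Beyond this, only the diagonal-concentration step needs to be written out rigorously (or bypassed entirely via the paper's univariate reduction), after which the Slutsky bookkeeping is routine as you describe.
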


In the case of fat tails with $\alpha \in (1,2)$, Theorem \ref{limitGini} tells us that the asymptotic distribution of the Gini estimator is always right-skewed notwithstanding the distribution of the underlying data generating process. Therefore heavily fat-tailed data not only induce a fatter-tailed limit for the Gini estimator, but they also change the shape of the limit law, which definitely moves away from the usual symmetric Gaussian. As a consequence, the Gini estimator, whose asymptotic consistency is still guaranteed \cite{rao}, will approach its true value more slowly, and from below. Some evidence of this was already given in Table \ref{tableofresults}. 

\section{The maximum likelihood estimator} \label{mlesec}

Theorem \ref{limitGini} indicates that the usual nonparametric estimator for the Gini index is not the best option when dealing with infinite-variance distributions, due to the skewness and the fatness of its asymptotic limit. The aim is to find estimators that still preserve their asymptotic normality under fat tails, which is not possible with nonparametric methods, as they all fall into the $\alpha$-stable Central Limit Theorem case \cite{embrecht, feller}. Hence the solution is to use parametric techniques. 

Theorem \ref{MLnormallimit} shows how, once a parametric family for the data generating process has been identified, it is possible to estimate the Gini index via MLE. The resulting estimator is not just asymptotically normal, but also asymptotically efficient. 

In Theorem \ref{MLnormallimit} we deal with random variables $X$ whose distribution belongs to the large and flexible exponential family \cite{shao}, i.e. whose density can be represented as 
$$
f_{\theta}(x)=h(x)e^{\left(\eta(\theta) T(x)-A(\theta)\right)},
$$ with $\theta\in\mathbb{R}$, and where $T(x)$, $\eta(\theta)$, $h(x)$, $A(\theta)$ are known functions.

\begin{thm}
\label{MLnormallimit}
Let $X\sim F_{\theta}$ such that $F_{\theta}$ is a distribution belonging to the exponential family. Then the Gini index obtained by plugging-in the maximum likelihood estimator of $\theta$,  $G^{ML}(X_{n})_{\theta}$, is asymptotically normal and efficient. Namely:
\begin{equation}
\label{giniML}
\sqrt{n}(G^{ML}(X_{n})_{\theta}-g_{\theta})\overset{d}{\to}N(0,g'^{2}_{\theta}I^{-1}(\theta)),
\end{equation}
where $g'_{\theta}=\frac{dg_{\theta}}{d\theta}$ and $I(\theta)$ is the Fisher Information.
\end{thm}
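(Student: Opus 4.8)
The plan is to exploit the fact that, once the parametric model is fixed, the Gini index is a deterministic smooth function of the single parameter, $g_\theta = g(\theta)$, so that estimating the Gini index reduces to estimating $\theta$ by maximum likelihood and then propagating the uncertainty through $g(\cdot)$ by means of the delta method. The efficiency claim will follow at no extra cost from the efficiency of the MLE together with the invariance property of maximum likelihood estimation.

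First I would invoke the classical asymptotic theory of maximum likelihood estimation for the exponential family. Writing $\hat{\theta}_n$ for the MLE of $\theta$ computed from $X_1,\dots,X_n$, the usual regularity conditions (smoothness of $\eta$ and $A$, legitimacy of interchanging differentiation and integration, positivity and finiteness of the Fisher information) hold automatically for densities of the stated form $f_\theta(x)=h(x)e^{\eta(\theta)T(x)-A(\theta)}$. Hence the standard results apply: $\hat{\theta}_n$ is consistent, $\hat{\theta}_n\overset{P}{\to}\theta$, asymptotically normal,
$$\sqrt{n}(\hat{\theta}_n-\theta)\overset{d}{\to}N(0,I^{-1}(\theta)),$$
and asymptotically efficient, its asymptotic variance attaining the Cram\'er--Rao lower bound. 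I would cite the corresponding theorems in \cite{shao} rather than re-derive them.

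Second, by the invariance property of the MLE the plug-in quantity $G^{ML}(X_{n})_\theta=g(\hat{\theta}_n)$ is itself the maximum likelihood estimator of the scalar $g(\theta)$. Assuming $g$ is continuously differentiable at $\theta$ with $g'_\theta=\frac{dg_\theta}{d\theta}\neq 0$, the delta method transfers the asymptotic normality of $\hat{\theta}_n$ to $g(\hat{\theta}_n)$,
$$\sqrt{n}(g(\hat{\theta}_n)-g(\theta))\overset{d}{\to}N(0,g'^{2}_\theta I^{-1}(\theta)),$$
which is exactly Equation \eqref{giniML}. Efficiency is then immediate: for a scalar function $\psi(\theta)$ the Cram\'er--Rao bound for estimating $\psi(\theta)$ is $[\psi'(\theta)]^{2}I^{-1}(\theta)$, so $g'^{2}_\theta I^{-1}(\theta)$ is precisely the bound for the transformed parameter $g(\theta)$, and the plug-in estimator attains it asymptotically.

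The routine part is the delta-method computation and the appeal to the standard MLE limit theorems. The genuinely delicate points I expect are the two regularity hypotheses that make this machinery applicable: first, that the specific exponential-family model satisfies the conditions guaranteeing an asymptotically normal and efficient MLE (standard, but worth stating, in particular finiteness of $I(\theta)$); and second, the differentiability of the map $\theta\mapsto g_\theta$ together with $g'_\theta\neq 0$. The latter is model-dependent and must be checked for each parametric family, requiring an explicit expression of the Gini index as a function of $\theta$ (as carried out for the Pareto case in Section \ref{sec:examples}) and of its derivative; where $g'_\theta=0$ the first-order delta method degenerates and a higher-order argument would be needed.
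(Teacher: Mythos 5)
Your proposal is correct and follows essentially the same route as the paper's own proof: asymptotic normality and efficiency of the MLE in the exponential family, the invariance principle applied to $g(\hat{\theta}_n)$, and the delta method to obtain the variance $g'^{2}_{\theta}I^{-1}(\theta)$. Your added caveat that $g'_{\theta}\neq 0$ is needed for a nondegenerate first-order limit is a sensible refinement of the paper's appeal to continuity and monotonicity of $g_{\theta}$, but it does not change the argument.
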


\begin{proof}
The result follows easily from the asymptotic efficiency of the maximum likelihood estimators of the exponential family, and the invariance principle of MLE. In particular, the validity of the invariance principle for the Gini index is granted by the continuity and the monotonicity of $g_{\theta}$ with respect to $\theta$. The asymptotic variance is then obtained by application of the delta-method \cite{shao}.
\end{proof}

\section{A Paretian illustration}
\label{sec:examples}

We provide an illustration of the obtained results using some artificial fat-tailed data. We choose a Pareto I \cite{pareto}, with density
\begin{equation}
\label{pareto}
f(x)=\alpha c^{\alpha } x^{-\alpha -1}	\,, x\geq c. 
\end{equation}
It is easy to verify that the corresponding survival function $\bar{F}(x)$ belongs to the regularly-varying class with tail parameter $\alpha$ and slowly-varying function $L(x)=c^{\alpha}$. We can therefore apply the results of Section \ref{sec:main} to obtain the following corollaries.

\begin{cor}
\label{paretoexampleS}
Let $X_{1},...,X_n$ be a sequence of i.i.d. observations with Pareto distribution with tail parameter $\alpha\in (1,2)$. The nonparametric Gini estimator is characterized by the following limit:
\begin{equation}
\label{eq:approxGinilimit}
D_{n}^{NP}=G^{NP}(X_{n})-g\sim S\left(\alpha,1,\frac{C_\alpha^{-\frac{1}{\alpha}}}{n^{\frac{\alpha-1}{\alpha}}}\frac{(\alpha-1)}{\alpha},0\right).
\end{equation}
\end{cor}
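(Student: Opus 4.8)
The plan is to specialize the general limit of Theorem~\ref{limitGini} to Paretian data and then make every constant explicit. First I would verify that the density in Equation~\eqref{pareto} satisfies the hypotheses of Theorem~\ref{limitGini}: it is supported on $[c,\infty)$ with $c>0$, it is regularly varying of index $\alpha$ with the constant slowly-varying function $L(x)=c^{\alpha}$ (as already remarked after Equation~\eqref{pareto}), and for $\alpha\in(1,2)$ it has finite mean $\mu=\frac{\alpha c}{\alpha-1}$ but infinite variance, so that $X\in DA(S_{\alpha})$. Taking $c=1$ (the standard Pareto, for which $g=\frac{1}{2\alpha-1}$), Equation~\eqref{giniSlim} applies verbatim and yields
$$
\frac{n^{\frac{\alpha-1}{\alpha}}}{L_{0}(n)}\bigl(G^{NP}(X_{n})-g\bigr)\overset{d}{\to} S\Bigl(\alpha,1,\tfrac{1}{\mu},0\Bigr).
$$

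The core of the argument is to pin down the slowly-varying sequence $L_{0}(n)$ for Paretian data. I would route this through the normalization of the General Central Limit Theorem for the partial sums $\sum_{i=1}^{n}Z_{(i)}$. Since $Z_{i}=(2F(X_{i})-1)X_{i}$ and $F(X_{i})\to 1$ on the upper tail, the $Z_{i}$ inherit the \emph{same} regularly-varying tail as the $X_{i}$, namely $\bar F_{Z}(x)\sim K x^{-\alpha}$ with $K=c^{\alpha}$; this is precisely the mechanism underlying Theorem~\ref{limitGMD}. Rewriting the normalization of Theorem~\ref{limitGMD} as $a_{n}^{-1}\bigl(\sum_{i}Z_{i}-n\theta\bigr)$ with $a_{n}=n^{1/\alpha}L_{0}(n)$, and comparing with the classical stable-domain normalizing sequence $a_{n}=(Kn/C_{\alpha})^{1/\alpha}$ that sends the centered sum to a standard $S_{\alpha,1}$ law, identifies $L_{0}(n)=(K/C_{\alpha})^{1/\alpha}=c\,C_{\alpha}^{-1/\alpha}$. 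Here $C_{\alpha}$ is the usual constant of the stable CLT (for instance $C_{\alpha}=\frac{1-\alpha}{\Gamma(2-\alpha)\cos(\pi\alpha/2)}$ in the $S1$ parametrization, which is positive for $\alpha\in(1,2)$). Crucially, because $L$ is constant for the Pareto, $L_{0}(n)$ reduces to a pure constant with no genuine $n$-dependence.

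Finally I would invert the normalization using the scaling property of stable laws: if $Y\sim S(\alpha,\beta,\gamma,0)$ then $bY\sim S(\alpha,\beta,b\gamma,0)$ for $b>0$. Applying this with $b=L_{0}(n)\,n^{-\frac{\alpha-1}{\alpha}}$ to the displayed limit gives
$$
D_{n}^{NP}=G^{NP}(X_{n})-g\sim S\Bigl(\alpha,1,\tfrac{L_{0}(n)}{n^{(\alpha-1)/\alpha}}\tfrac{1}{\mu},0\Bigr),
$$
and substituting $\tfrac{1}{\mu}=\tfrac{\alpha-1}{\alpha c}$ together with $L_{0}(n)=c\,C_{\alpha}^{-1/\alpha}$ makes the factor $c$ cancel, leaving exactly the scale $\frac{C_{\alpha}^{-1/\alpha}}{n^{(\alpha-1)/\alpha}}\frac{\alpha-1}{\alpha}$ claimed in Equation~\eqref{eq:approxGinilimit}.

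The step I expect to be the main obstacle is the explicit determination of $C_{\alpha}$ and the verification that $L_{0}(n)$ collapses to a constant: this requires carefully transporting the tail constant $K=c^{\alpha}$ of the transformed variables $Z_{i}$ through the stable CLT normalization and confirming that the slowly-varying part contributes no $n$-dependence, so that the asymptotic $\overset{d}{\to}$ statement can legitimately be rewritten as the finite-$n$ approximate law ``$\sim$'' used in the corollary.
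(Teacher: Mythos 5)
Your proposal is correct and follows essentially the same route as the paper's proof: a direct application of Theorem \ref{limitGini} with $\mu=\frac{\alpha}{\alpha-1}$, identification of $L_{0}(n)=C_\alpha^{-1/\alpha}$ from the normalizing condition in Equation \eqref{L0seq} (which collapses to a constant because the Pareto slowly-varying function is constant), and transfer of the normalization into the scale parameter via the stable scaling property; your additional steps (tail equivalence of the $Z_i$ with the $X_i$, and the explicit cancellation of $c$ that justifies the paper's ``without loss of generality $c=1$'') are sound refinements of the same argument. One caveat: the explicit formula you quote in passing, $C_{\alpha}=\frac{1-\alpha}{\Gamma(2-\alpha)\cos(\pi\alpha/2)}$, is the \emph{reciprocal} of the paper's definition in Equation \eqref{L0seq}; since your derivation only ever invokes the defining relation $nL(c_n)/c_n^{\alpha}\to C_\alpha$, which agrees with the paper's convention, this does not affect the validity of your proof, but it does flag a discrepancy worth checking (your value is the standard Samorodnitsky--Taqqu tail constant for the $S1$ standardized stable law, which suggests the displayed constant in Equation \eqref{L0seq} may be inverted).
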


\begin{proof}
Without loss of generality we can assume $c=1$ in Equation \eqref{pareto}. The results is a mere application of Theorem \ref{limitGini}, remembering that a Pareto distribution is in the domain of attraction of $\alpha$-stable random variables with slowly-varying function $L(x)=1$. The sequence $c_n$ to satisfy Equation \eqref{L0seq} becomes $c_n=n^{\frac{1}{\alpha}}C_\alpha^{-\frac{1}{\alpha}}$, therefore we have $L_{0}(n)=C_\alpha^{-\frac{1}{\alpha}}$, which is independent of $n$. Additionally the mean of the distribution is also a function of $\alpha$, that is $\mu=\frac{\alpha}{\alpha-1}$.
\end{proof}

\begin{cor}
\label{paretoexampleN}
Let the sample $X_{1},...,X_n$ be distributed as in Corollary \ref{paretoexampleS}, let $G^{ML}_{\theta}$ be the maximum likelihood estimator for the Gini index as defined in Theorem \ref{MLnormallimit}. Then the MLE Gini estimator, rescaled by its true mean $g$, has the following limit:

\begin{equation}
\label{eq:asydistParetoML}
D^{ML}_n=G^{ML}_{\alpha}(X_{n})-g\sim N\left(0,\frac{4\alpha^2}{n(2\alpha-1)^4}\right),
\end{equation}
where $N$ indicates a Gaussian.
\end{cor}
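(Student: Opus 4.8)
The plan is to reduce the statement to a direct application of Theorem \ref{MLnormallimit}, so that the whole task amounts to identifying the three ingredients $g_\alpha$, $g'_\alpha$, and $I(\alpha)$ that build the asymptotic variance $(g'_\alpha)^2 I^{-1}(\alpha)$ for the Pareto model, with the role of $\theta$ played by the tail index $\alpha$. First I would rewrite the Pareto density of Equation \eqref{pareto} (taking $c=1$ without loss of generality, exactly as in Corollary \ref{paretoexampleS}) in exponential-family form. Writing $f(x)=\alpha x^{-\alpha-1}=\frac{1}{x}\,e^{-\alpha\ln x+\ln\alpha}$ exhibits the natural parameter $\eta(\alpha)=-\alpha$, the sufficient statistic $T(x)=\ln x$, the carrier $h(x)=1/x$, and the log-partition $A(\alpha)=-\ln\alpha$. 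This confirms that the hypothesis of Theorem \ref{MLnormallimit} is satisfied and that its conclusion, Equation \eqref{giniML}, applies verbatim to the plug-in MLE estimator of $g$.

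Next I would compute the three quantities entering the variance. The Gini index of the Pareto admits the closed form $g_\alpha=\frac{1}{2\alpha-1}$, which one obtains directly from the stochastic representation \eqref{Gini} by evaluating the Gini mean difference $\mathbb{E}|X'-X''|=\frac{2\alpha}{(\alpha-1)(2\alpha-1)}$ against the mean $\mu=\frac{\alpha}{\alpha-1}$; note this already yields $g=0.8333$ at $\alpha=1.1$, consistent with Table \ref{tableofresults}. Differentiating gives $g'_\alpha=-\frac{2}{(2\alpha-1)^2}$, hence $(g'_\alpha)^2=\frac{4}{(2\alpha-1)^4}$. For the Fisher information I would use the single-observation log-likelihood $\ell(\alpha)=\ln\alpha-(\alpha+1)\ln x$, whose second derivative is the constant $-1/\alpha^2$; therefore $I(\alpha)=1/\alpha^2$ and $I^{-1}(\alpha)=\alpha^2$.

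Finally I would substitute into Equation \eqref{giniML}: the asymptotic variance is $(g'_\alpha)^2 I^{-1}(\alpha)=\frac{4}{(2\alpha-1)^4}\cdot\alpha^2=\frac{4\alpha^2}{(2\alpha-1)^4}$, and absorbing the $\sqrt{n}$ normalization yields precisely the claimed law $N\!\left(0,\frac{4\alpha^2}{n(2\alpha-1)^4}\right)$. There is no genuine obstacle here, since Theorem \ref{MLnormallimit} carries the analytic weight; the only step requiring a little care is the derivation of the closed-form $g_\alpha=\frac{1}{2\alpha-1}$, and the only modeling point worth flagging is the invariance/monotonicity condition behind the plug-in, which is immediate because $g_\alpha$ is smooth and strictly decreasing in $\alpha$ on $(1,2)$.
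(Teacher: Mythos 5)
Your proposal is correct and follows essentially the same route as the paper's own proof: verify the Pareto lies in the exponential family so Theorem \ref{MLnormallimit} applies, use $g_{\alpha}=\frac{1}{2\alpha-1}$ and the Fisher information $I(\alpha)=\frac{1}{\alpha^{2}}$, and conclude via the delta-method variance $(g'_{\alpha})^{2}I^{-1}(\alpha)=\frac{4\alpha^{2}}{(2\alpha-1)^{4}}$. The only difference is one of detail: the paper cites the closed form of the MLE Gini estimator and the Fisher information from the literature, whereas you derive the exponential-family representation, $g_{\alpha}$, and $I(\alpha)$ explicitly, which makes your write-up more self-contained but not materially different.
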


\begin{proof}
The functional form of the maximum likelihood estimator for the Gini index is known to be $G^{ML}_{\theta}=\frac{1}{2 \alpha^{ML}-1}$ \cite{kk}. The result then follows from the fact that the Pareto distribution (with known minimum value $x_m$) belongs to an exponential family and therefore satisfies the regularity conditions necessary for the asymptotic normality and efficiency of the maximum likelihood estimator. Also notice that the Fisher information for a Pareto distribution is $\frac{1}{\alpha^2}$.
\end{proof}

Now that we have worked out both asymptotic distributions, we can compare the quality of the convergence for both the MLE and the nonparametric case when dealing with Paretian data, which we use as the prototype for the more general class of fat-tailed observations.

In particular, we can approximate the distribution of the deviations of the estimator from the true value $g$ of the Gini index for finite sample sizes, by using Equations \eqref{eq:approxGinilimit} and \eqref{eq:asydistParetoML}. 

\begin{figure*} [htb]
\subfloat[$\alpha=1.8$]{\includegraphics[scale=0.20]{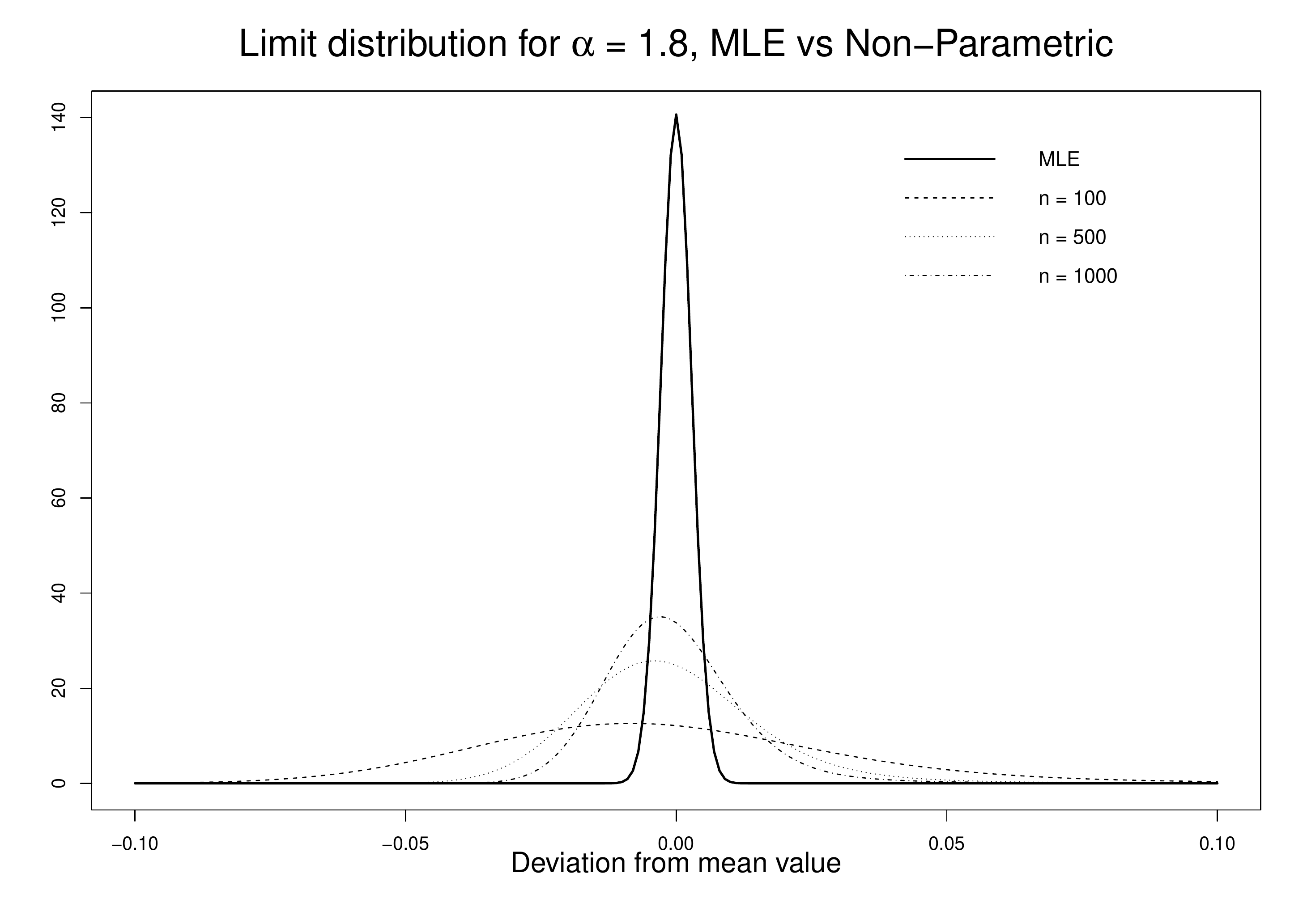}}
\subfloat[$\alpha=1.6$]{\includegraphics[scale=0.20]{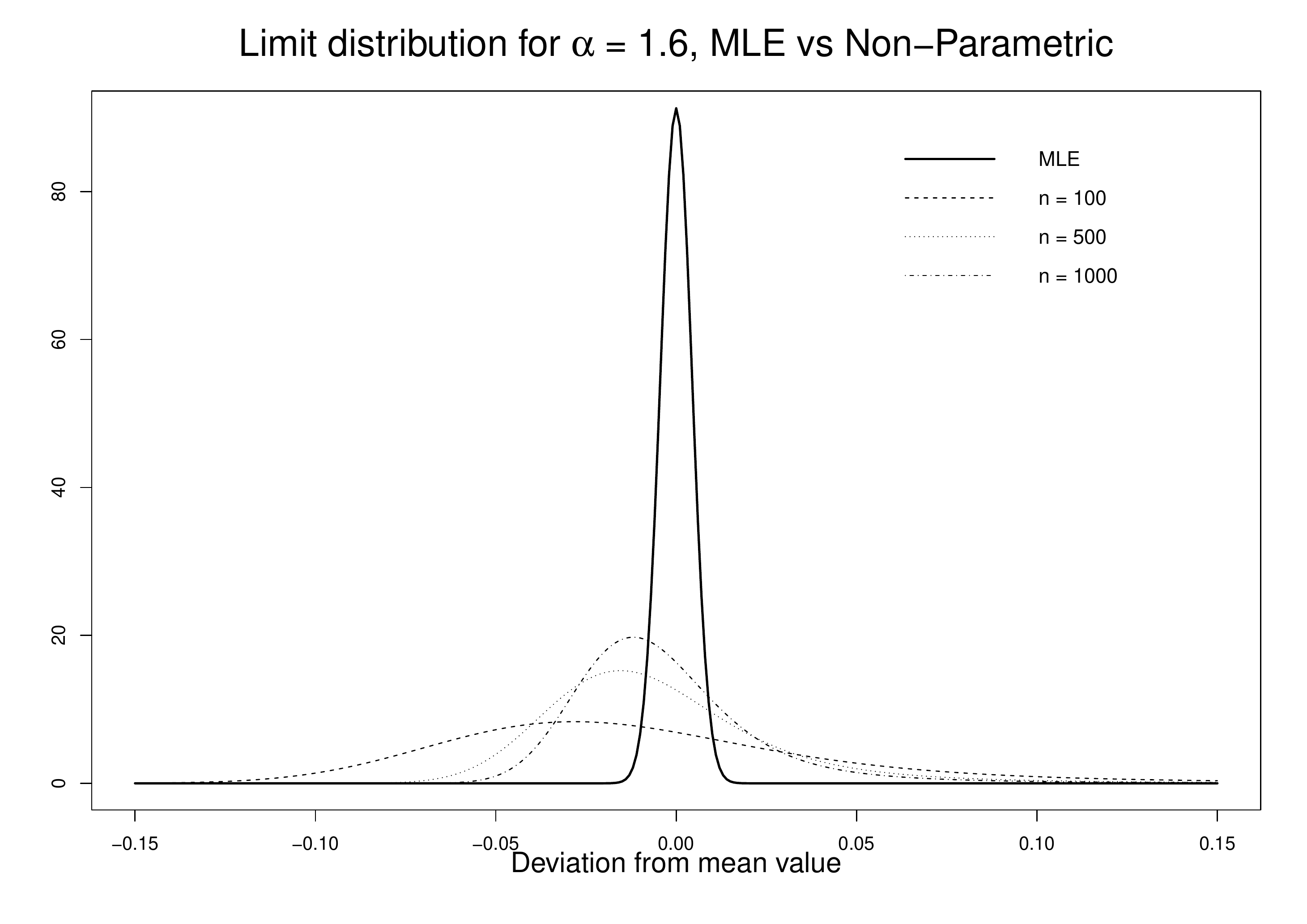}} \\
\subfloat[$\alpha=1.4$]{\includegraphics[scale=0.20]{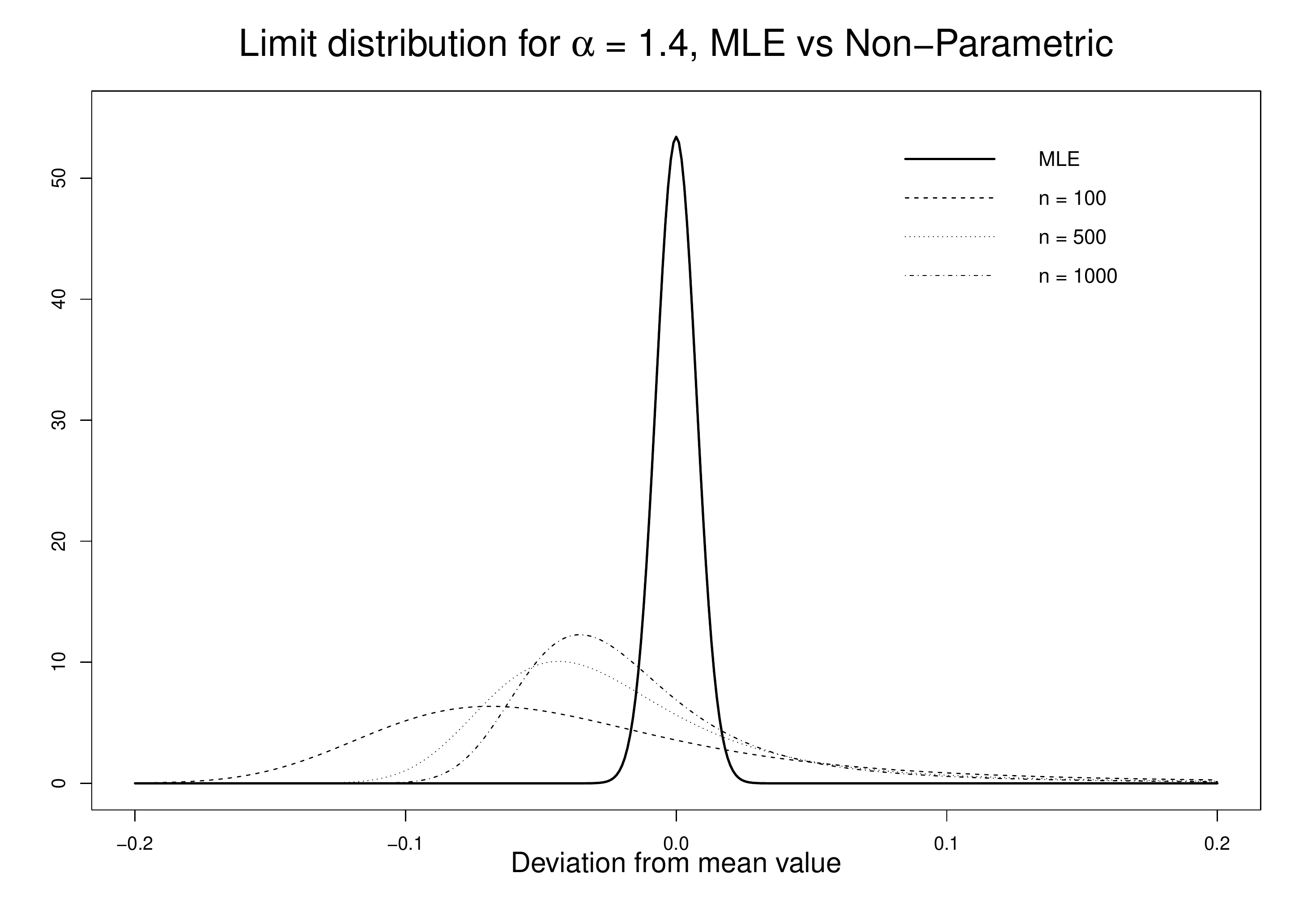}}
\subfloat[$\alpha=1.2$]{\includegraphics[scale=0.20]{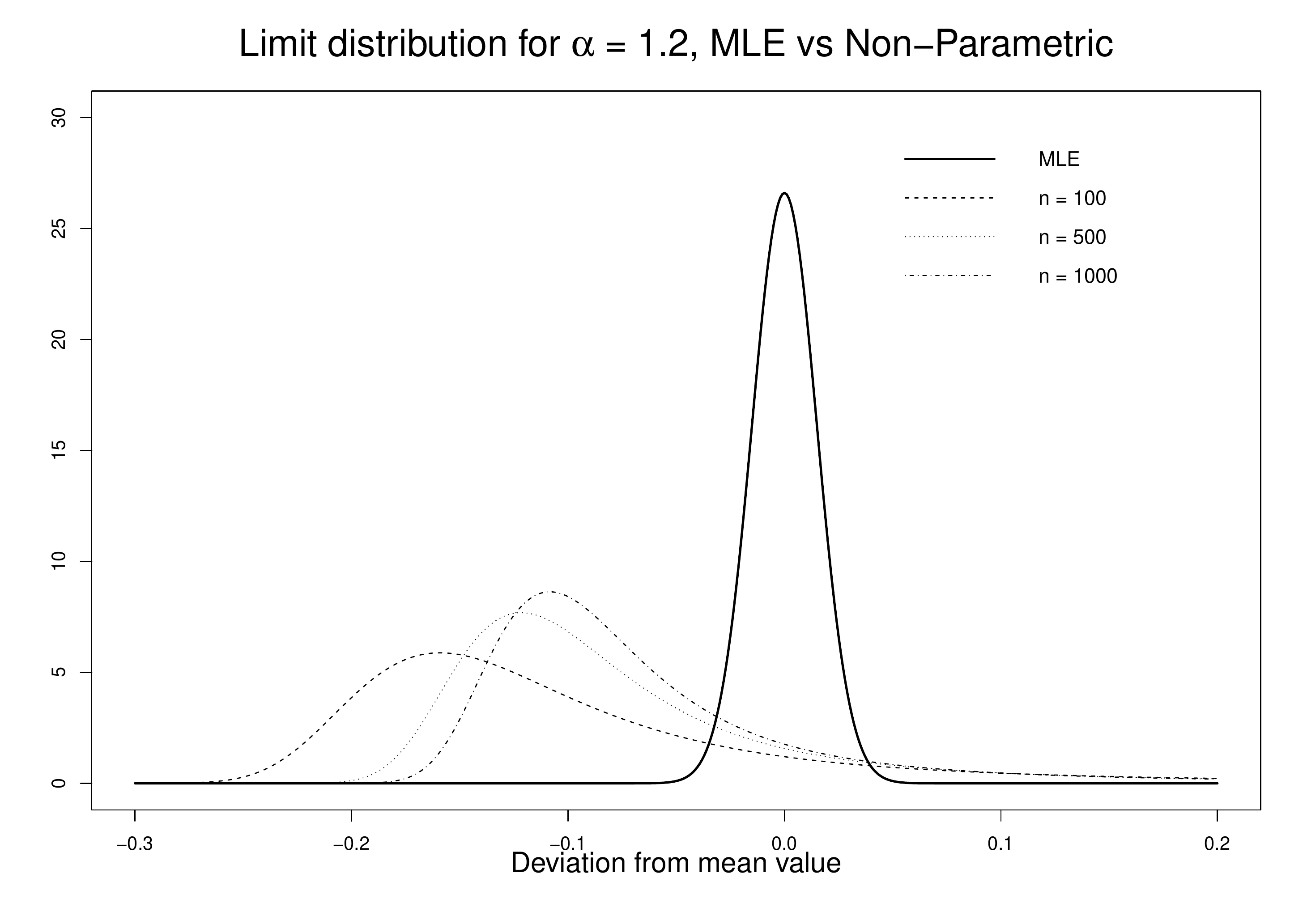}}
\caption{Comparisons between the maximum likelihood and the nonparametric asymptotic distributions for different values of the tail index $\alpha$. The number of observations for MLE is fixed to $n=100$. Note that, even if all distributions have mean zero, the mode of the distributions of the nonparametric estimator is different from zero, because of the skewness.}
\label{normalvs3stable}
\end{figure*}

Figure \ref{normalvs3stable} shows how the deviations around the mean of the two different types of estimators are distributed and how these distributions change as the number of observations increases. In particular, to facilitate the comparison between the maximum likelihood and the nonparametric estimators, we fixed the number of observation in the MLE case, while letting them vary in the nonparametric one. We perform this study for different types of tail indices to show how large the impact is on the consistency of the estimator. It is worth noticing that, as the tail index decreases towards $1$ (the threshold value for a infinite mean), the mode of the distribution of the nonparametric estimator moves farther away from the mean of the distribution (centered on $0$ by definition, given that we are dealing with deviations from the mean). This effect is responsible for the small sample bias observed in applications. Such a phenomenon is not present in the MLE case, thanks to the the normality of the limit for every value of the tail parameter.

We can make our argument more rigorous by assessing the number of observations $\tilde{n}$ needed for the nonparametric estimator to be as good as the MLE one, under different tail scenarios. Let's consider the likelihood-ratio-type function
\begin{equation} \label{likeratio}
r(c,n)=\frac{P_S(|D^{NP}_n|>c)}{P_N(|D^{ML}_{100}|>c)},
\end{equation}
where $P_S(|D^{NP}_n|>c)$ and $P_N(|D^{ML}_{100}|>c)$ are the probabilities ($\alpha$-stable and Gaussian respectively) of the centered estimators in the nonparametric, and in the MLE cases, of exceeding the thresholds $\pm c$, as per Equations \eqref{eq:asydistParetoML} and \eqref{eq:approxGinilimit}. In the nonparametric case the number of observations $n$ is allowed to change, while in the MLE case it is fixed to 100. We then look for the value $\tilde{n}$ such that $r(c,\tilde{n})=1$ for fixed $c$.

Table \ref{table_r(a)} displays the results for different thresholds $c$ and tail parameters $\alpha$. In particular, we can see how the MLE estimator outperforms the nonparametric one, which requires a much larger number of observations to obtain the same tail probability of the MLE with $n$ fixed to 100. For example, we need at least $80\times 10^6$ observations for the nonparametric estimator to obtain the same probability of exceeding the $\pm 0.02$ threshold of the MLE one, when $\alpha=1.2$.

\begin{table}[h]
\centering
\caption{The number of observations $\tilde{n}$ needed for the nonparametric estimator to match the tail probabilities, for different threshold values $c$ and different values of the tail index $\alpha$, of the maximum likelihood estimator with fixed $n=100$.}
\label{table_r(a)}
\begin{tabular}{c|cccc}
	    & \multicolumn{4}{c}{\textit{\begin{tabular}[c]{@{}c@{}}Threshold c as per Equation \eqref{likeratio}: \end{tabular}}} \\
$\alpha$ & 0.005                 & 0.01                  & 0.015                & 0.02                \\ \hline
1.8   & $ 27\times 10^3$                   & $12\times 10^5$                  & $12\times 10^6$                 & $63\times 10^5$               \\
1.5   & $21\times 10^4$                   & $21\times 10^4$                   & $46\times 10^5$                 & $81\times 10^7$                \\
1.2   & $33\times 10^8$                 & $67\times 10^7$                 & $20\times 10^7$               & $80\times 10^6$
\end{tabular}
\end{table}

Interestingly, the number of observations needed to match the tail probabilities in Equation \eqref{likeratio} does not vary uniformly with the threshold. This is expected, since as the threshold goes to infinity or to zero, the tail probabilities remain the same for every value of $n$. Therefore, given the unimodality of the limit distributions, we expect that there will be a threshold maximizing the number of observations needed to match the tail probabilities, while for all the other levels the number of observations will be smaller.

We conclude that, when in presence of fat-tailed data with infinite variance, a plug-in MLE based estimator should be preferred over the nonparametric one.

\section{Small sample correction}
\label{sec:correction}

Theorem \ref{limitGini} can be also used to provide a correction for the bias of the nonparametric estimator for small sample sizes. The key idea is to recognize that, for unimodal distributions, most observations come from around the mode. In symmetric distributions the mode and the mean coincide, thus most observations will be close to the mean value as well, not so for skewed distributions: for right-skewed continuous unimodal distributions the mode is lower than the mean. Therefore, given that the asymptotic distribution of the nonparametric Gini index is right-skewed, we expect that the observed value of the Gini index will be usually lower than the true one (placed at the mean level). We can quantify this difference (i.e. the bias) by looking at the distance between the mode and the mean, and once this distance is known, we can correct our Gini estimate by adding it back\footnote{Another idea, which we have tested in writing the paper, is to use the distance between the median and the mean; the performances are comparable.}.

Formally, we aim to derive a corrected nonparametric estimator $G^{C}(X_{n})$ such that
\begin{equation}
\label{G_corr}
G^{C}(X_{n})=G^{NP}(X_{n})+||m(G^{NP}(X_{n}))-\mathbb{E}(G^{NP}(X_{n}))||,
\end{equation}
where $||m(G^{NP}(X_{n}))-\mathbb{E}(G^{NP}(X_{n}))||$ is the distance between the mode $m$ and the mean of the distribution of the nonparametric Gini estimator $G^{NP}(X_{n})$.

Performing the type of correction described in Equation \eqref{G_corr} is equivalent to shifting the distribution of $G^{NP}(X_{n})$ in order to place its mode on the true value of the Gini index.

Ideally, we would like to measure this mode-mean distance $||m(G^{NP}(X_{n}))-\mathbb{E}(G^{NP}(X_{n}))||$ on the exact distribution of the Gini index to get the most accurate correction. However, the finite distribution is not always easily derivable as it requires assumptions on the parametric structure of the data generating process (which, in most cases, is unknown for fat-tailed data \cite{kk}). We therefore propose to use the limiting distribution for the nonparametric Gini obtained in Section \ref{sec:main} to approximate the finite sample distribution, and to estimate the mode-mean distance with it. This procedure allows for more freedom in the modeling assumptions and potentially decreases the number of parameters to be estimated, given that the limiting distribution only depends on the tail index and the mean of the data, which can be usually assumed to be a function of the tail index itself, as in the Paretian case where $\mu=\frac{\alpha}{\alpha-1}$.

By exploiting the location-scale property of $\alpha$-stable distributions and Equation \eqref{giniSlim}, we approximate the distribution of $G^{NP}(X_{n})$ for finite samples by
\begin{equation}
\label{G_finite_approx}
G^{NP}(X_{n})\sim S\left(\alpha,1,\gamma(n) ,g\right),
\end{equation}
where $\gamma(n) = \frac{1}{n^{\frac{\alpha-1}{\alpha}}}\frac{L_{0}(n)}{\mu}$ is the scale parameter of the limiting distribution.

As a consequence, thanks to the linearity of the mode for $\alpha$-stable distributions, we have
\begin{multline*}
\label{approx_mode}
||m(G^{NP}(X_{n}))-\mathbb{E}(G^{NP}(X_{n}))||\approx||m(\alpha,\gamma(n))+g-g||=||m(\alpha,\gamma(n))||,
\end{multline*}
where $m(\alpha,\gamma(n))$ is the mode function of an $\alpha$-stable distribution with zero mean.

The implication is that, in order to obtain the correction term, knowledge of the true Gini index is not necessary, given that $m(\alpha,\gamma(n))$ does not depend on $g$. We then estimate the correction term as
\begin{equation}
\hat{m}(\alpha,\gamma(n)) = \arg\max_{x} s(x), 
\end{equation}
where $s(x)$ is the numerical density of the associated $\alpha$-stable distribution in Equation \eqref{G_finite_approx}, but centered on $0$. This comes from the fact that, for $\alpha$-stable distributions, the mode is not available in closed form, but it can be easily computed numerically \cite{nolan}, using the unimodality of the law. 

The corrected nonparametric estimator is thus
\begin{equation}
G^{C}(X_{n})=G^{NP}(X_{n})+\hat{m}(\alpha,\gamma(n)),
\end{equation}
whose asymptotic distribution is
\begin{equation}
\label{g_corr_dist}
G^{C}(X_{n})\sim S\left(\alpha,1,\gamma(n),g+\hat{m}(\alpha,\gamma(n))\right).
\end{equation}

Note that the correction term $\hat{m}\left(\alpha,\gamma(n)\right)$ is a function of the tail index $\alpha$ and is connected to the sample size $n$ by the scale parameter $\gamma(n)$ of the associated limiting distribution. It is important to point out that $\hat{m}(\alpha,\gamma(n))$ is decreasing in $n$, and that $\lim_{n\to\infty} \hat{m}(\alpha, \gamma(n)) \to 0$. This happens because, as $n$ increases, the distribution described in Equation \eqref{G_finite_approx} becomes more and more centered around its mean value, shrinking to zero the distance between the mode and the mean. This ensures the asymptotic equivalence of the corrected estimator and the nonparametric one. Just observe that
\begin{eqnarray*}
\lim_{n\to\infty} |G(X_{n})^{C}-G^{NP}(X_{n})| &=&\lim_{n\to\infty}|G^{NP}(X_{n})+\hat{m}(\alpha,\gamma(n))-G^{NP}(X_{n})|\\
&=&\lim_{n\to\infty}|\hat{m}(\alpha,\gamma(n))|\to 0.	
\end{eqnarray*}

Naturally, thanks to the correction, $G^{C}(X_{n})$ will always behave better in small samples. Consider also that, from Equation \eqref{g_corr_dist}, the distribution of the corrected estimator has now for mean $g+\hat{m}(\alpha,\gamma(n))$, which converges to the true Gini $g$ as $n\to\infty$.

From a theoretical point of view, the quality of this correction depends on the distance between the exact distribution of $G^{NP}(X_{n})$ and its $\alpha$-stable limit; the closer the two are to each other, the better the approximation. However, given that, in most cases, the exact distribution of $G^{NP}(X_{n})$ is unknown, it is not possible to give more details.

From what we have written so far, it is clear that the correction term depends on the tail index of the data, and possibly also on their mean. These parameters, if not assumed to be known a priori, must be estimated. Therefore the additional uncertainty due to the estimation will reflect also on the quality of the correction.

We conclude this Section with the discussion of the effect of the correction procedure with a simple example. In a Monte Carlo experiment, we simulate $1000$ Paretian samples of increasing size, from $n=10$ to $n=2000$, and for each sample size we compute both the original nonparametric estimator $G^{NP}(X_{n})$ and the corrected $G^{C}(X_{n})$. We repeat the experiment for different $\alpha$'s. Figure \ref{correction} presents the results.

It is clear that the corrected estimators always perform better than the uncorrected ones in terms of absolute deviation from the true Gini value. In particular, our numerical experiment shows that for small sample sizes with  $n\leq 1000$ the gain is quite remarkable for all the different values of $\alpha\in(1,2)$. However, as expected, the difference between the estimators decreases with the sample size, as the correction term decreases both in $n$ and in the tail index $\alpha$. Notice that, when the tail index equals $2$, we obtain the symmetric Gaussian distribution and the two estimators coincide, given that, thanks to the finiteness of the variance, the nonparametric estimator is no longer biased.

\begin{figure*} 
\subfloat[$\alpha=1.8$]{\includegraphics[scale=0.20]{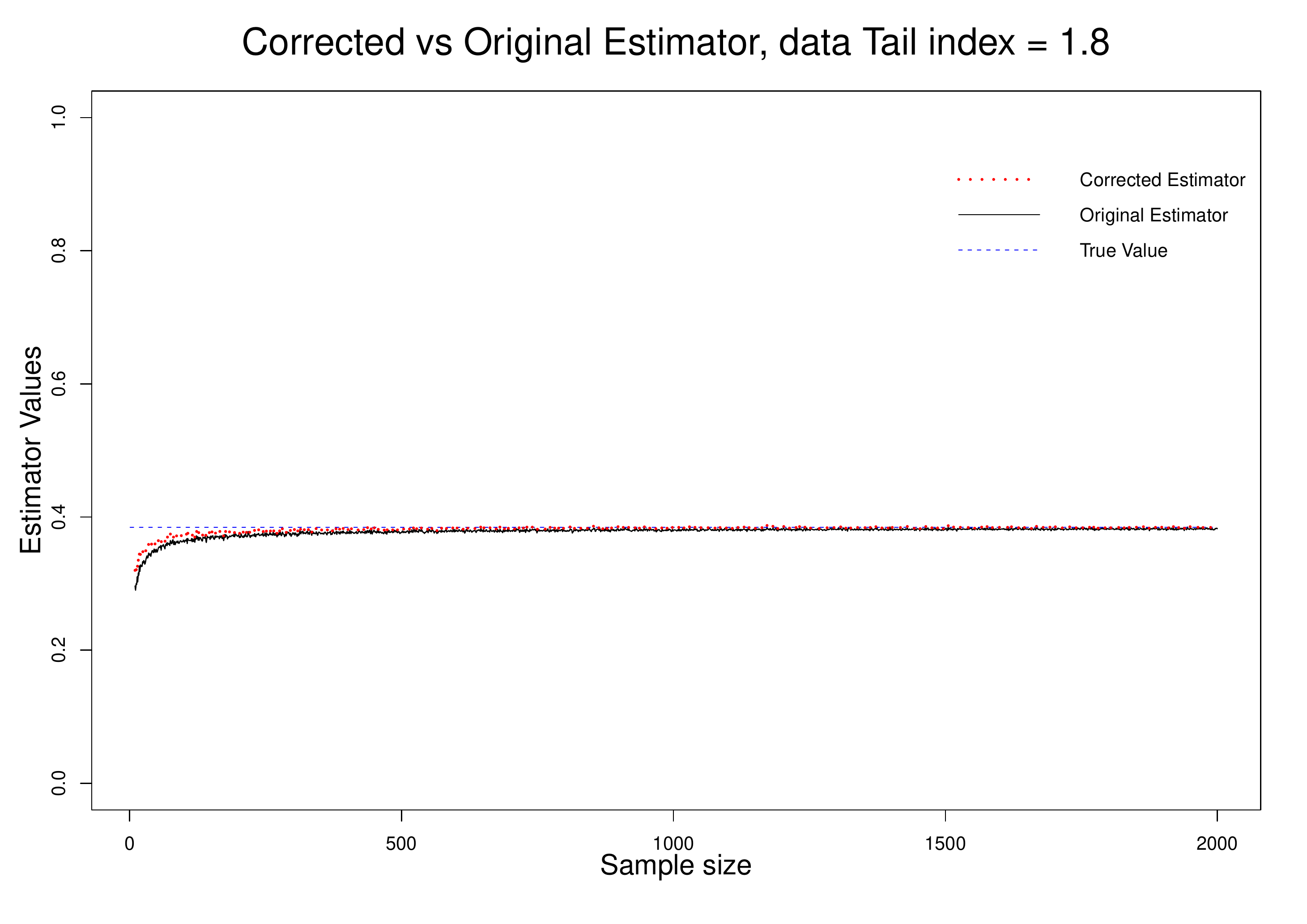}}
\subfloat[$\alpha=1.6$]{\includegraphics[scale=0.20]{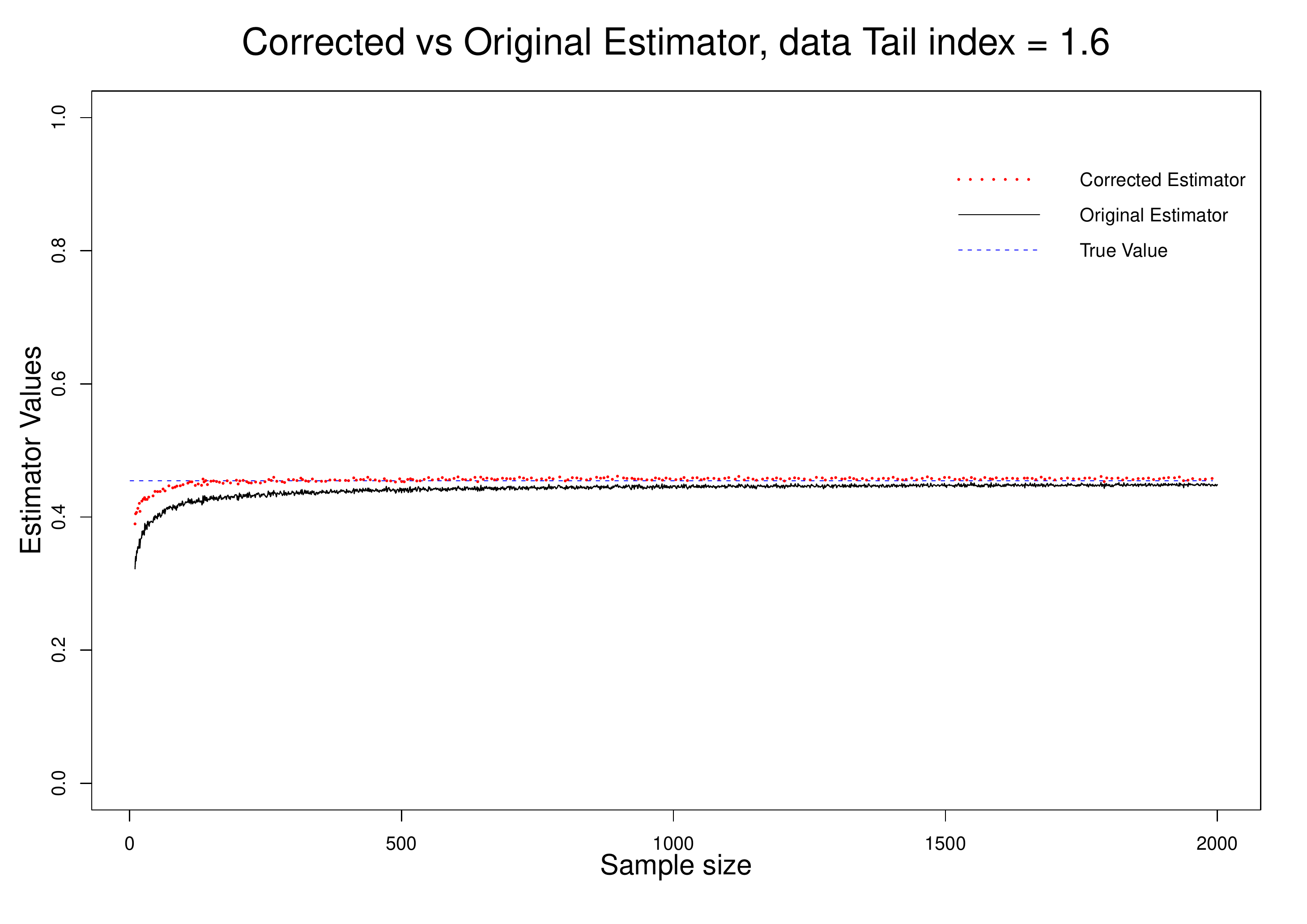}}\\
\subfloat[$\alpha=1.4$]{\includegraphics[scale=0.20]{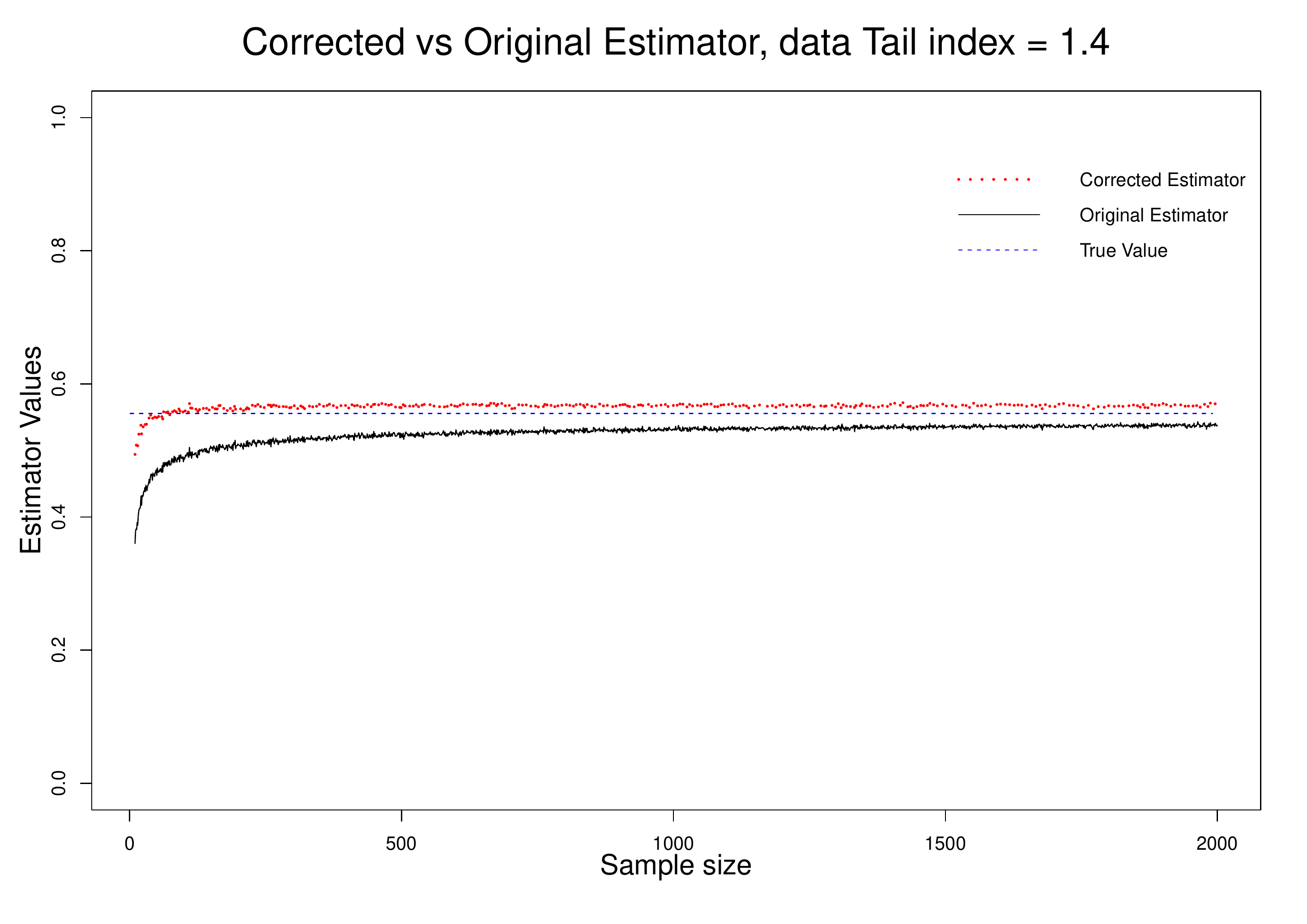}}
\subfloat[$\alpha=1.2$]{\includegraphics[scale=0.20]{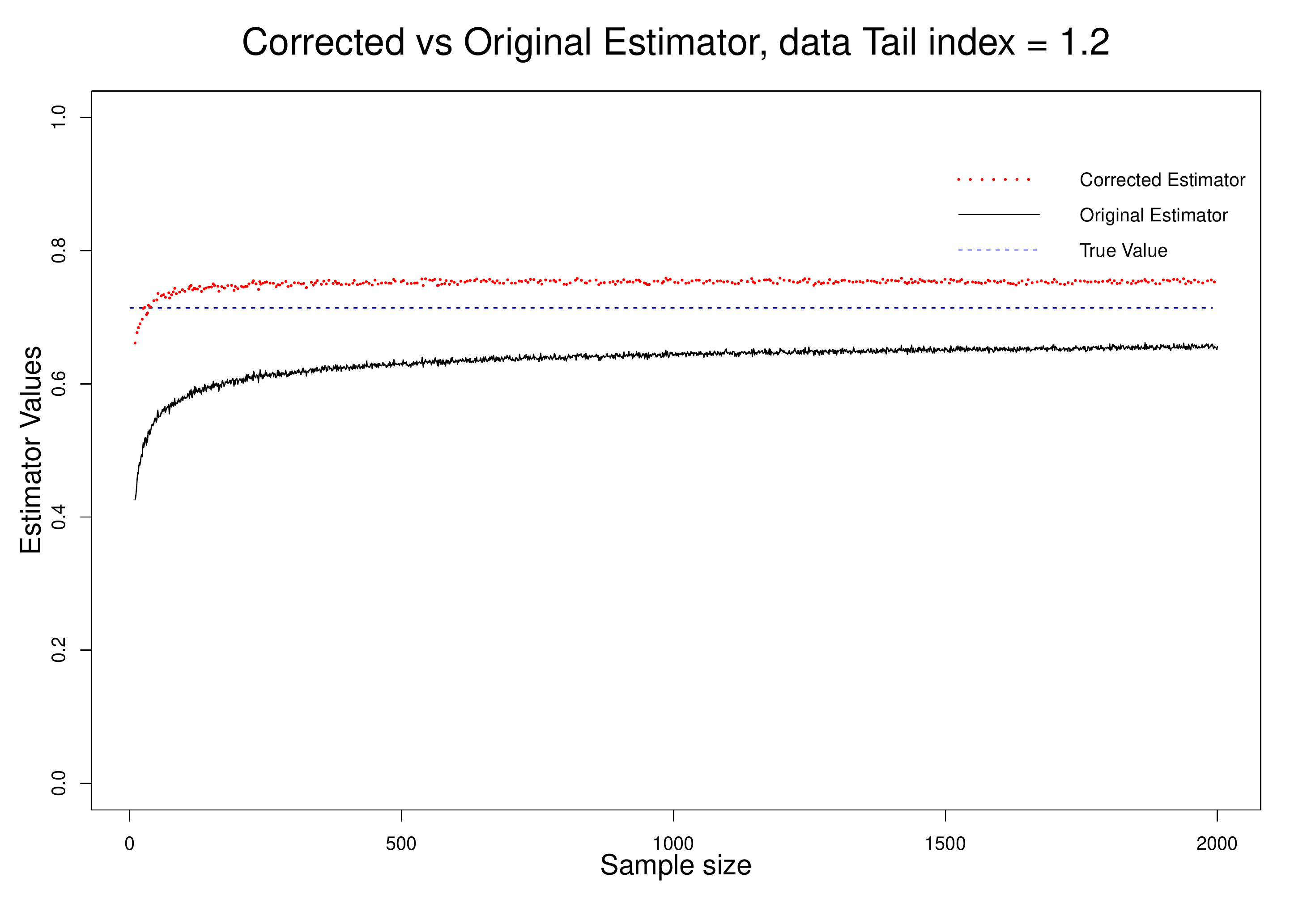}}
\caption{Comparisons between the corrected nonparametric estimator (in red, the one on top) and the usual nonparametric estimator (in black, the one below). For small sample sizes the corrected one clearly improves the quality of the estimation.}
\label{correction}
\end{figure*}

\section{Conclusions} \label{conclusions}

In this paper we address the issue of the asymptotic behavior of the nonparametric estimator of the Gini index in presence of a distribution with infinite variance, an issue that has been curiously ignored by the literature. The central mistake in the nonparametric methods largely used is to believe that asymptotic consistency translates into equivalent pre-asymptotic properties. 

We show that a parametric approach provides better asymptotic results thanks to the properties of maximum likelihood estimation. Hence we strongly suggest that, if the collected data are suspected to be fat-tailed, parametric methods should be preferred.

In situations where a fully parametric approach cannot be used, we propose a simple correction mechanism for the nonparametric estimator based on the distance between the mode and the mean of its asymptotic distribution. Even if the correction works nicely, we suggest caution in its use owing to additional uncertainty from the estimation of the correction term.

\section*{Technical Appendix}

\subsection*{Proof of Lemma \ref{lemmaiid}}

Let $U=F(X)$ be the standard uniformly distributed integral probability transform of the random variable $X$. For the order statistics, we then have \cite{david}: $X_{(i)}\stackrel{a.s.}{=}F^{-1}(U_{(i)})$. Hence
\begin{equation}
\label{Rnexplicit_2}
R_n=\frac{1}{n}\sum_{i=1}^{n}(i/n-U_{(i)})F^{-1}(U_{(i)}).
\end{equation}
Now by definition of empirical c.d.f it follows that
\begin{equation}
\label{finalprep}
R_n=\frac{1}{n}\sum_{i=1}^{n}(F_{n}(U_{(i)})-U_{(i)})F^{-1}(U_{(i)}),
\end{equation}
where $F_{n}(u)=\frac{1}{n}\sum_{i=1}^{n}1_{U_{i}\leq u}$ is the empirical c.d.f of uniformly distributed random variables.

To show that $R_n \xrightarrow{L^{1}}0$, we are going to impose an upper bound that goes to zero. First we notice that
\begin{equation}
\label{firststep}
\mathbb{E}|R_n|\leq \frac{1}{n}\sum_{i=1}^{n}\mathbb{E}|(F_{n}(U_{(i)})-U_{(i)})F^{-1}(U_{(i)})|.
\end{equation}
To build a bound for the right-hand side (r.h.s) of \eqref{firststep}, we can exploit the fact that, while $F^{-1}(U_{(i)})$ might be just $L^{1}$-integrable, $F_{n}(U_{(i)})-U_{(i)}$ is $L^{\infty}$ integrable, therefore we can use H\"older's inequality with $q=\infty$ and $p=1$. It follows that
\begin{equation}
\label{hodler_step}
\frac{1}{n}\sum_{i=1}^{n}\mathbb{E}|(F_{n}(U_{(i)})-U_{(i)})F^{-1}(U_{(i)})| \leq \frac{1}{n}\sum_{i=1}^{n}\mathbb{E}\sup_{U_{(i)}}|(F_{n}(U_{(i)})-U_{(i)})|\mathbb{E}|F^{-1}(U_{(i)})|.
\end{equation}
Then, thanks to the Cauchy-Schwarz inequality, we get
\begin{eqnarray}
\label{cs_step}
&&\frac{1}{n}\sum_{i=1}^{n}\mathbb{E}\sup_{U_{(i)}}|(F_{n}(U_{(i)})-U_{(i)})|\mathbb{E}|F^{-1}(U_{(i)})| \nonumber \\
&&\leq \left( \frac{1}{n}\sum_{i=1}^{n}(\mathbb{E}\sup_{U_{(i)}}|(F_{n}(U_{(i)})-U_{(i)})|)^{2}\frac{1}{n}\sum_{i=1}^{n}(\mathbb{E}(F^{-1}(U_{(i)})))^{2}\right)^{\frac{1}{2}}.
\end{eqnarray}
Now, first recall that $\sum_{i=1}^{n}F^{-1}(U_{(i)})\stackrel{a.s.}{=}\sum_{i=1}^{n}F^{-1}(U_{i})$ with $U_{i}$, $i=1,...,n$, being an i.i.d sequence, then notice that $\mathbb{E}(F^{-1}(U_{i}))=\mu$, so that the second term of Equation \eqref{cs_step} becomes
\begin{equation}
\label{cs_step2}
\mu \left( \frac{1}{n}\sum_{i=1}^{n}(\mathbb{E}\sup_{U_{(i)}}|(F_{n}(U_{(i)})-U_{(i)})|)^{2}\right)^{\frac{1}{2}}.
\end{equation}
The final step is to show that Equation \eqref{cs_step2} goes to zero as $n\to\infty$.

We know that $F_{n}$ is the empirical c.d.f of uniform random variables. Using the triangular inequality the inner term of Equation \eqref{cs_step2} can be bounded as 
\begin{eqnarray}
\label{laststep}
&&\frac{1}{n}\sum_{i=1}^{n}(\mathbb{E}\sup_{U_{(i)}}|(F_{n}(U_{(i)})-U_{(i)})|)^{2} \\
&&\leq \frac{1}{n}\sum_{i=1}^{n}(\mathbb{E}\sup_{U_{(i)}}|(F_{n}(U_{(i)})-F(U_{(i)}))|)^{2} +  \frac{1}{n}\sum_{i=1}^{n}(\mathbb{E}\sup_{U_{(i)}}|(F(U_{(i)})-U_{(i)})|)^{2}. \nonumber
\end{eqnarray}
Since we are dealing with uniforms, we known that $F(U)=u$, and the second term in the r.h.s of \eqref{laststep} vanishes.

We can then bound $\mathbb{E}(\sup_{U_{(i)}}|(F_{n}(U_{(i)})-F(U_{(i)})|)$ using the so called Vapnik-Chervonenkis (VC) inequality, a uniform bound for empirical processes \cite{Dasgupta, empprocess, VC-dim}, getting 
\begin{equation}
\label{VC-ineq}
\mathbb{E}\sup_{U_{(i)}}|(F_{n}(U_{(i)})-F(U_{(i)})| \leq \sqrt[]{\frac{\log(n+1)+\log(2)}{n}}. 
\end{equation}
Combining Equation \eqref{VC-ineq} with Equation \eqref{cs_step2} we obtain
\begin{equation}
\label{grandfinale}
\mu \left( \frac{1}{n}\sum_{i=1}^{n}(\mathbb{E}\sup_{U_{(i)}}|(F_{n}(U_{(i)})-U_{(i)})|)^{2}\right)^{\frac{1}{2}} \leq \mu \, \sqrt{\frac{\log(n+1)+\log(2)}{n}},
\end{equation}
which goes to zero as $n\to\infty$, thus proving the first claim.

For the second claim, it is sufficient to observe that the r.h.s of \eqref{grandfinale} still goes to zero when multiplied by $\frac{n^{\frac{\alpha-1}{\alpha}}}{L_{0}(n)}$ if $\alpha\in(1,2)$.

\subsection*{Proof of Theorem \ref{limitGMD}}

The first part of the proof consists in showing that we can rewrite Equation \eqref{GMDstable} as a function of i.i.d random variables in place of order statistics, to be able to apply a Central Limit Theorem (CLT) argument.

Let's start by considering the sequence 
\begin{equation}
\label{originalseq}
\frac{1}{n}\sum_{i=1}^{n}Z_{(i)}=\frac{1}{n}\sum_{i=1}^{n}\left(2\frac{i-1}{n-1}-1\right)F^{-1}(U_{(i)}).
\end{equation}

Using the integral probability transform $X\stackrel{d}{=}F^{-1}(U)$ with $U$ standard uniform, and adding and removing $\frac{1}{n}\sum_{i=1}^n\left(2U_{(i)}-1 \right)F^{-1}(U_{(i)})$, the r.h.s. in Equation \eqref{originalseq} can be rewritten as
\begin{equation}
\frac{1}{n}\sum_{i=1}^{n}Z_{(i)} = \frac{1}{n}\sum_{i=1}^{n}(2U_{(i)}-1)F^{-1}(U_{(i)}) + \frac{1}{n}\sum_{i=1}^{n} 2\left(\frac{i-1}{n-1}-U_{(i)}\right)F^{-1}(U_{(i)}). 
\end{equation}

Then, by using the properties of order statistics \cite{david} we obtain the following almost sure equivalence
\begin{equation}
\label{decomposition}
\frac{1}{n}\sum_{i=1}^{n}Z_{(i)} \stackrel{a.s.}{=} \frac{1}{n}\sum_{i=1}^{n}(2U_{i}-1)F^{-1}(U_{i}) + \frac{1}{n}\sum_{i=1}^{n} 2\left(\frac{i-1}{n-1}-U_{(i)}\right)F^{-1}(U_{(i)}).
\end{equation}

Note that the first term in the r.h.s of \eqref{decomposition} is a function of i.i.d random variables as desired, while the second term is just a reminder, therefore
\[\frac{1}{n}\sum_{i=1}^{n}Z_{(i)} \stackrel{a.s.}{=}\frac{1}{n}\sum_{i=1}^{n}Z_{i}+R_{n},\]
with $Z_i=(2U_{i}-1)F^{-1}(U_{i})$ and $R_n=\frac{1}{n}\sum_{i=1}^{n}(2(\frac{i-1}{n-1}-U_{(i)}))F^{-1}(U_{(i)})$.

Given Equation \eqref{GMDstable} and exploiting the decomposition given in \eqref{decomposition} we can rewrite our claim as
\begin{equation}
\label{claim_intermediate_step}
\frac{n^{\frac{\alpha-1}{\alpha}}}{L_{0}(n)}\left(\frac{1}{n}\sum_{i=1}^{n}Z_{(i)}-\theta\right)=\frac{n^{\frac{\alpha-1}{\alpha}}}{L_{0}(n)}\left(\frac{1}{n}\sum_{i=1}^{n}Z_{i}-\theta\right) + \frac{n^{\frac{\alpha-1}{\alpha}}}{L_{0}(n)}R_{n}.
\end{equation}

From the second claim of the Lemma \ref{lemmaiid} and Slutsky Theorem, the convergence in Equation \eqref{GMDstable} can be proven by looking at the behavior of the sequence 
\begin{equation}
\frac{n^{\frac{\alpha-1}{\alpha}}}{L_{0}(n)}\left(\frac{1}{n}\sum_{i=1}^{n}Z_{i}-\theta\right),
\end{equation}
where $Z_{i}=(2U_{i}-1)F^{-1}(U_{i})=(2F(X_{i})-1)X_{i}$. This reduces to proving that $Z_{i}$ is in the fat tails domain of attraction.

Recall that by assumption $X\in DA(S_\alpha)$ with $\alpha\in(1,2)$. This assumption enables us to use a particular type of CLT argument for the convergence of the sum of fat-tailed random variables. However, we first need to prove that $Z\in DA(S_{\alpha})$ as well, that is $P(|Z|>z)\sim L(z)z^{-\alpha}$, with $\alpha\in (1,2)$ and $L(z)$ slowly-varying.

Notice that
$$P(|\tilde{Z}|>z)\leq P(|Z|>z)\leq P(2X>z),$$
where $\tilde{Z}=(2U-1)X$ and $U\perp X$. The first bound holds because of the positive dependence between $X$ and $F(X)$ and it can be proven rigorously by noting that $2UX\leq 2F(X)X$ by the so-called re-arrangement inequality \cite{polya}. The upper bound conversely is trivial.

Using the properties of slowly-varying functions, we have  $P(2X>z)\sim 2^{\alpha}L(z)z^{-\alpha}$. To show that $\tilde{Z}\in DA(S_\alpha)$, we use the Breiman's Theorem, which ensure the stability of the $\alpha$-stable class under product, as long as the second random variable is not too fat-tailed \cite{yangwu}.

To apply the Theorem we re-write $P(|\tilde{Z}|>z)$ as
\begin{multline*}
P(|\tilde{Z}|>z)=P(\tilde{Z}>z)+P(-\tilde{Z}>z)=P(\tilde{U}X>z)+P(-\tilde{U}X>z),	
\end{multline*}
where $\tilde{U}$ is a standard uniform with $\tilde{U} \perp X$.

We focus on $P(\tilde{U}X>z)$ since the procedure is the same for $P(-\tilde{U}X>z)$. We have
\begin{multline*}	P(\tilde{U}X>z)=P(\tilde{U}X>z|\tilde{U}>0)P(\tilde{U}>0)+P(\tilde{U}X>z|\tilde{U}\leq0)P(\tilde{U}\leq 0),
\end{multline*}
for $z\to +\infty$.

Now, we have that $P(\tilde{U}X>z|\tilde{U}\leq0)\to 0$, while, by applying Breiman's Theorem, $P(\tilde{U}X>z|\tilde{U}>0)$ becomes
\[P(\tilde{U}X>z|\tilde{U}>0)\to E(\tilde{U}^{\alpha}|U>0)P(X>z)P(U>0).\]
Therefore
\begin{multline*}
	P(|\tilde{Z}|>z)\to \frac{1}{2}E(\tilde{U}^{\alpha}|U>0)P(X>z)+\frac{1}{2}E((-\tilde{U})^{\alpha}|U\leq 0)P(X>z).
\end{multline*}

From this
\begin{eqnarray*}
	P(|\tilde{Z}|>z)&\to& \frac{1}{2}P(X>z)[E(\tilde{U})^{\alpha}|U>0)+E((-\tilde{U}^{\alpha}|U\leq 0)]\\
	&=&\frac{2^\alpha}{1-\alpha}P(X>z)\sim \frac{2^\alpha}{1-\alpha}L(z)z^{-\alpha}.
\end{eqnarray*}

We can then conclude that, by the squeezing Theorem \cite{feller},
\[P(|Z|>z)\sim L(z)z^{-\alpha},\] 
as $z\to \infty$. Therefore $Z\in DA(S_\alpha)$.

We are now ready to invoke the Generalized Central Limit Theorem (GCLT)\cite{embrecht} for the sequence $Z_i$, i.e. 
\begin{equation}
nc_{n}^{-1}\left(\frac{1}{n}\sum_{i=1}^{n}Z_{i}-\mathbb{E}(Z_i)\right)\overset{d}{\to} S_{\alpha,\beta}.
\end{equation}
with $\mathbb{E}(Z_i)=\theta$, $S_{\alpha,\beta}$ a standardized $\alpha$-stable random variable, and where $c_n$ is a sequence which must satisfy
\begin{equation}
\label{L0seq}
\lim_{n\to\infty}\frac{nL(c_n)}{c_n^{\alpha}}=\frac{\Gamma(2-\alpha)|\cos(\frac{\pi\alpha}{2})|}{\alpha-1}=C_\alpha.
\end{equation}
Notice that $c_n$ can be represented as $c_{n}=n^{\frac{1}{\alpha}}L_{0}(n)$, where $L_{0}(n)$ is another slowly-varying function possibly different from $L(n)$.

The skewness parameter $\beta$ is such that 
\[\frac{P(Z>z)}{P(|Z|>z)}\to\frac{1+\beta}{2}.\] Recalling that, by construction, $Z\in [-c,+\infty)$, the above expression reduces to
\begin{equation}
\label{beta}
\frac{P(Z>z)}{P(Z>z)+P(-Z>z)}\to \frac{P(Z>z)}{P(Z>z)}=1\to\frac{1+\beta}{2},
\end{equation}
therefore $\beta=1$. This, combined with Equation \eqref{claim_intermediate_step}, the result for the reminder $R_n$ of Lemma \ref{lemmaiid} and Slutsky Theorem, allows us to conclude that the same weak limits holds for the ordered sequence of $Z_{(i)}$ in Equation \eqref{GMDstable} as well.

\subsection*{Proof of Theorem \ref{limitGini}}

The first step of the proof is to show that the ordered sequence $\frac{\sum_{i=1}^{n}Z_{(i)}}{\sum_{i=1}^{n}X_i}$, characterizing the Gini index, is equivalent in distribution to the i.i.d sequence $\frac{\sum_{i=1}^{n}Z_{i}}{\sum_{i=1}^{n}X_i}$.
In order to prove this, it is sufficient to apply the factorization in Equation \eqref{decomposition} to Equation \eqref{giniSlim}, getting
\begin{equation}
\label{decomposition_gini}
\frac{n^{\frac{\alpha-1}{\alpha}}}{L_{0}(n)}\left(\frac{\sum_{i=1}^{n}Z_i}{\sum_{i=1}^{n}X_i}-\frac{\theta}{\mu}\right)+\frac{n^{\frac{\alpha-1}{\alpha}}}{L_{0}(n)}R_n\frac{n}{\sum_{i=1}^{n}X_i}.
\end{equation}
By Lemma \ref{lemmaiid} and the application of the continuous mapping and Slutsky Theorems, the second term in Equation \eqref{decomposition_gini} goes to zero at least in probability.
Therefore to prove the claim it is sufficient to derive a weak limit for the following sequence
\begin{equation}
\label{gini_proof}
n^{\frac{\alpha-1}{\alpha}}\frac{1}{L_{0}(n)}\left(\frac{\sum_{i=1}^{n}Z_{i}}{\sum_{i=1}^{n}X_{i}}-\frac{\theta}{\mu}\right).
\end{equation}
Expanding Equation \eqref{gini_proof} and recalling that $Z_{i}=(2F(X_{i})-1)X_{i}$, we get
\begin{equation}
\label{gini_expanded}
\frac{n^{\frac{\alpha-1}{\alpha}}}{L_{0}(n)}\frac{n}{\sum_{i=1}^{n}X_{i}}\left(  \frac{1}{n}\sum_{i=1}^{n}X_{i}\left(2F(X_{i})-1-\frac{\theta}{\mu}\right)\right).
\end{equation}
The term $\frac{n}{\sum_{i=1}^{n}X_{i}}$ in Equation \eqref{gini_expanded} converges in probability to $\frac{1}{\mu}$ by an application of the continuous mapping Theorem, and the fact that we are dealing with positive random variables $X$. Hence it will contribute to the final limit via Slutsky Theorem.

We first start by focusing on the study of the limit law of the term
\begin{equation}
\label{target_sequence}
\frac{n^{\frac{\alpha-1}{\alpha}}}{L_{0}(n)}\frac{1}{n}\sum_{i=1}^{n}X_{i}\left(2F(X_{i})-1-\frac{\theta}{\mu}\right).
\end{equation}
Set $\hat{Z}_{i}=X_{i}(2F(X_{i})-1-\frac{\theta}{\mu})$ and note that $\mathbb{E}(\hat{Z}_{i})=0$, since $\mathbb{E}(Z_{i})=\theta$ and $\mathbb{E}(X_i)=\mu$.

In order to apply a GCLT argument to characterize the limit distribution of the sequence $\frac{n^{\frac{\alpha-1}{\alpha}}}{L_{0}(n)}\frac{1}{n}\sum_{i=1}^{n}\hat{Z}_{i}$ we need to prove that $\hat{Z}\in DA(S_\alpha)$. If so then we can apply GCLT to
\begin{equation}
\label{newGCLT}
\frac{n^{\frac{\alpha-1}{\alpha}}}{L_{0}(n)}\left(\frac{\sum_{i=1}^{n}\hat{Z_{i}}}{n}-\mathbb{E}(\hat{Z}_i)\right).
\end{equation}

Note that, since $\mathbb{E}(\hat{Z}_i)=0$, Equation \eqref{newGCLT} equals Equation \eqref{target_sequence}.

To prove that $\hat{Z}\in DA(S_\alpha)$, remember that $\hat{Z_{i}}=X_i(2F(X_{i})-1-\frac{\theta}{\mu})$ is just $Z_{i}=X_{i}(2F(X_i)-1)$ shifted by $\frac{\theta}{\mu}$. Therefore the same argument used in Theorem \ref{limitGMD} for $Z$ applies here to show that $\hat{Z}\in DA(S_\alpha)$. In particular we can point out that $\hat{Z}$ and $Z$ (therefore also $X$) share the same $\alpha$ and slowly-varying function $L(n)$.

Notice that by assumption $X\in [c,\infty)$ with $c>0$ and we are dealing with continuous distributions, therefore $\hat{Z}\in[-c(1+\frac{\theta}{\mu}),\infty)$. As a consequence the left tail of $\hat{Z}$ does not contribute to changing the limit skewness parameter $\beta$, which remains equal to $1$ (as for $Z$) by an application of Equation \eqref{beta}.

Therefore, by applying the GCLT we finally get
\begin{equation}
\label{gini_final}
n^{\frac{\alpha-1}{\alpha}}\frac{1}{L_{0}(n)}(\frac{\sum_{i=1}^{n}Z_{i}}{\sum_{i=1}^{n}X_{i}}-\frac{\theta}{\mu})\xrightarrow{d} \frac{1}{\mu}S(\alpha,1,1,0).
\end{equation}

We conclude the proof by noting that, as proven in Equation \eqref{decomposition_gini}, the weak limit of the Gini index is characterized by the i.i.d sequence of $\frac{\sum_{i=1}^{n}Z_i}{\sum_{i=1}^{n}X_i}$ rather than the ordered one, and that an $\alpha$-stable random variable is closed under scaling by a constant \cite{taqqu}. 

\section*{References}

\end{document}